\numberwithin{equation}{section}
\numberwithin{figure}{section}
\theoremstyle{plain}
\newtheorem{thm}{\protect\theoremname}
  \theoremstyle{plain}
  \newtheorem{lem}[thm]{\protect\lemmaname}
  \theoremstyle{plain}
  \newtheorem{cor}[thm]{\protect\corollaryname}
  \theoremstyle{remark}
  \newtheorem*{rem*}{\protect\remarkname}
  \theoremstyle{plain}
  \newtheorem{prop}[thm]{\protect\propositionname}
  \providecommand{\corollaryname}{Corollary}
  \providecommand{\lemmaname}{Lemma}
  \providecommand{\propositionname}{Proposition}
  \providecommand{\remarkname}{Remark}
\providecommand{\theoremname}{Theorem}
\begin{document}

\title{Autocorrelations of Binary Sequences and Run Structure }

\author{Jürgen Willms}

\email{willms.juergen@fh-swf.de}

\address{Institut für Computer Science, Vision and Computational Intelligence,
Fachhochschule Südwestfalen, D-59872 Meschede, Germany}
\begin{abstract}
We \foreignlanguage{american}{analyze} the connection between the
autocorrelation of a binary sequence and its run structure given by
the run length encoding. We show that both the periodic and the aperiodic
autocorrelation of a binary sequence can be formulated in terms of
the run structure. The run structure is given by the consecutive runs
of the sequence. Let $C=(C_{0},C_{1},\cdots,C_{n})$ denote the\emph{
}autocorrelation vector of a binary sequence and $\triangle$ the
difference operator. We prove that the $k$th component of $\triangle^{2}(C)$
can be directly calculated by using the consecutive runs of total
length $k$. In particular this shows that the $k$th autocorrelation
is already determined by all consecutive runs of total length $l<k$.
In the aperiodic case we show how the run vector $R$ can be efficiently
calculated and give a characterization of skew-symmetric sequences
in terms of their run length encoding.
\end{abstract}

\keywords{binary sequence, autocorrelation, run structure, run, run length
encoding, skew-symmetric}

\date{14.4.2013}

\maketitle

\section{Introduction}

Let $n$ be a positive integer and and let $a=(a_{1},a_{2},\cdots,a_{n})$
be a (finite) sequence of real numbers. The \emph{length} $n$ of
the sequence $a$ will be denoted by $|a|$. $a$ is called a \emph{binary}
sequence if $a_{i}\in\{-1,1\}$ for all $i=1,\cdots,n$. In the following
we analyze in detail the connection between the autocorrelation of
a binary sequence and its run structure given by the run length encoding.
Binary sequences with suitable autocorrelation properties play an
important part in a wide range of different engineering applications.
For example they are used in signal processing in order to detect
signals in a noisy background. The autocorrelation measures the similarity
between the original sequence and its translate. In many applications
it is of interest to collectively minimize the absolute values of
the off-peak autocorrelations. For a survey on this topic we refer
to \cite{borwein2008merit,jedwab2005surveyspringerlink,jungnickel1999perfect}.
Depending on the type of application there are two types of autocorrelations
commonly used: the aperiodic and the periodic autocorrelation.

For $k=0,1,\cdots,n-1$ the $k$th \emph{aperiodic autocorrelation}
is given by 
\begin{equation}
C_{k}(a):=\sum_{i=1}^{n-k}a_{i}a_{i+k}.\label{eq:ck}
\end{equation}
In the periodic case put $a_{n+i}:=a_{i}$ for $i\geq1$; for $k=0,1,\cdots,n-1$
the $k$th \emph{periodic autocorrelation }is then defined by

\begin{equation}
\tilde{C}_{k}(a):=\sum_{i=1}^{n}a_{i}a_{i+k}.\label{eq:pck}
\end{equation}
In the following we will additionally put $C_{n}(a):=0$ and $\tilde{C}_{n}(a):=n$.
Note that for a binary sequence the peak autocorrelation equals the
length of the sequence: $C_{0}(a)=\tilde{C}_{0}(a)=n$. Note further
that for $k=1,\cdots,n$ we have $\tilde{C}_{k}(a)=\tilde{C}_{n-k}(a)$
and 
\begin{equation}
\tilde{C}_{k}(a)=C_{k}(a)+C_{n-k}(a).\label{eq:Ckperiodic}
\end{equation}

In the following we analyze in detail the connection between the autocorrelations
of a binary sequence and its run structure. A run is defined as a
substring of maximal length where all elements have the same value.
Runs as well as autocorrelation values were used in \cite{golomb1967shift}
in order to measure apparent randomness in a binary sequence (according
to \cite{golomb2005signal} these randomness postulates first appeared
in \cite{golomb1955sequences}) .

We prove that both the periodic and the aperiodic autocorrelation
of a binary sequence can be formulated in terms of the run structure.
As we will see the run structure is given by the consecutive runs
of $a.$ Consecutive runs of $a$ with total length $k$ determine
$R_{k}$, the $k$th element of $R$; we call $R$ the run vector.
If $C:=(C_{0}(a),C_{1}(a),\cdots,C_{n}(a))$ denotes the\emph{ }autocorrelation
vector of $a$ and $\triangle$ the forward difference operator, then
we will prove that $\triangle^{2}(C)=-2R.$ In particular this shows
that the consecutive runs of total length up to $k-1$ determine the
$k$th autocorrelation. 

For the periodic case this was established in \cite{cai2009autocorrelation}.
The objective of this paper is twofold: to simplify the proof in \cite{cai2009autocorrelation}
and to find a similar relationship for the aperiodic autocorrelation.
We start with the latter; in the first part of this paper the aperiodic
case is considered and it is shown how the aperiodic autocorrelation
of a binary sequence can be formulated in terms of the run structure.
In the aperiodic case we show how the run vector can be efficiently
calculated based on a simple algorithm and we derive a further practical
formula for calculating the elements of the run vector. Furthermore,
we give a characterization of skew-symmetric sequences in terms of
their run length encoding. Finally, we consider the periodic case.
The presented proof of the aperiodic case can also be applied to the
periodic case with minor modifications, resulting in a new simplified
and more direct proof. This paper, however, follows a different and
even shorter approach: the results for the periodic case are directly
derived from the aperiodic case by using (\ref{eq:Ckperiodic}). 

After the completion of this paper the author became aware that a
different run correlation technique for the aperiodic case was published
in \cite{polge673421}. There a tabular arrangement for the correlation
calculation was developed by using a set of sequential relations which
relate the aperiodic autocorrelation to the run structure. This is
essentially the result of Theorem \ref{thm:main_aperiodic} which
we prove in the first part of this paper. In \cite{polge1986} these
results were used for a search strategy in order to construct binary
sequences with specified aperiodic autocorrelation values by explicitly
eliminating large subsets of binary sequences and thus reducing the
search space.

\section{Preliminaries }

In the following $a$ will always be a fixed binary sequence of length
$n$. Furthermore, in the aperiodic case we will always put $a_{0}:=0$
and $a_{n+1}:=0$ in order to circumvent boundary problems. 

A \emph{substring} of $a$ always represents a non-empty contiguous
part $(a_{i},a_{i+1},\cdots,a_{j-1})$ of $a$ with $1\leq i<j\leq n+1$;
quite similar to the concept of an half-open interval it will be denoted
by $a(i,j)$. In the following we will also always distinguish between
the substrings $a(i,j)$ and $a(i',j')$ whenever $(i,j)\neq(i',j')$.
Hence $a$ has $\frac{n\cdot(n+1)}{2}$ different substrings. Note
that by our definition a substring is always non-empty and that the
length $|a(i,j)|$ of the substring $a(i,j)$ is given by $|a(i,j)|=j-i$.
If $1<i$ and $j<n+1$, then $a(i,j)$ is called an \emph{inner} substring,
otherwise\emph{ }$a(i,j)$ is called an \emph{outer} substring.

\subsection{Runs, Run Blocks and Run Length Encoding}

A \emph{run} of $a$ is a substring of $a$ with maximal length where
all elements have the same value. Thus if $a(i,j)$ is a run of $a$,
then $a_{i-1}\neq a_{i}=a_{i+1}=\cdots=a_{j-1}\neq a_{j}$. In the
following $\gamma$ will always denote the total number of runs of
the sequence $a$. Consecutive runs form what we will call a \emph{run
block}. Thus for $1\leq i<j\leq n+1$ a substring $a(i,j)$ is a run
block\emph{ of $a$,} if and only if $a_{i-1}\neq a_{i}$ and $a_{j-1}\neq a_{j}$.
Let for example $a$ be the binary sequence of length 13 given by
$a=(+++++++---+++)$; here and in the following the symbol $'+'$
stands for 1 and the symbol $'-'$ for -1. In this case we have $\gamma=3$
since $a$ has three runs, namely $a(1,8)$, $a(8,11)$ and $a(11,14)$.
Furthermore, $a$ has a total of six run blocks: there are the three
run blocks $a(1,8)$, $a(8,11)$ and $a(11,14)$ consisting of just
a single run, there are the two run blocks $a(1,11)$ and $a(8,14)$
consisting of two consecutive runs and there is the run block $a(1,14)$
which is $a$ itself consisting of three consecutive runs.

For an inner run block $a(i,j)$ (i.e.\ a run block which is an inner
substring such as, for example $a(8,11)$ of the previous example)
we have $a_{i-1}=-a_{i}$ and $a_{j-1}=-a_{j}$. Note since $a_{0}=a_{n+1}=0$
the sequence $a$ itself is represented by the run block $a(1,n+1)$
and that each run block of $a$ can be uniquely divided in runs of
$a$. In particular, the sequence $a$ can be uniquely divided in
$\gamma$ runs $a(i_{k},i_{k+1})$ with $1=i_{1}<i_{2}<\cdots<i_{\gamma+1}=n+1$;
$a(i_{k},i_{k+1})$ is then called the \emph{$k$th run} of $a$.
If we put $r_{k}:=|a(i_{k},i_{k+1})|=i_{k+1}-i_{k}$, then the sequence
$r=(r_{1},r_{2},\cdots,r_{\gamma})$ is called the \emph{run length
encoding} of $a$. Note that beside $a$ only the binary sequence
$(-a_{1},-a_{2},\cdots,-a_{n})$ has the same run length encoding
as $a$.

Let for example $a$ be the binary sequence of length 13 given by
$a=(++++++-------)$, then $\gamma=2$ and the run length encoding
$r$ of $a$ is given by $r=(6,7)$. In this case $a$ has two runs
and three blocks. If $a=(+++++++---+++)$, then the run length encoding
of $a$ is given by $r=(7,3,3)$ and as already noted $a$ has three
runs and six run blocks. The binary sequence of $a=(+++------+++---)$
has four runs and its run length encoding is given by $r=(3,6,3,3)$. 

As we will see later the weight of a run block will be used in order
to calculate the autocorrelation. The \emph{weight} $w$ of a substring
$a(i,j)$ is defined by 
\begin{equation}
w(a(i,j)):=\begin{cases}
2a_{i}\cdot a_{j-1}\; & \mbox{if }a(i,j)\mbox{ is an inner run block}\\
a_{i}\cdot a_{j-1} & \mbox{if }a(i,j)\mbox{ is an outer run block}\\
0 & \mbox{otherwise}.
\end{cases}\label{eq:weight}
\end{equation}

Note that for a substring $b$ we have $w(b)=0$ unless $b$ is a
run block. In this case $|w(b)|=2$ if $b$ is an inner and $|w(b)|=1$
if $b$ is an outer run block. We remark further, that if a run block
$b=a(i,j)$ consists of $m$ consecutive runs, then $a_{i}\cdot a_{j-1}=-(-1)^{m}$
and thus $w(b)>0$ if $m$ is odd and $w(b)<0$ if $m$ is even.

\subsection{The Aperiodic Run Structure}

In the following $r$ will always denote the run length encoding of
$a$ with $r_{k}=i_{k+1}-i_{k}$ and $\gamma$ as defined above. Let
us define the \emph{aperiodic} \emph{run structure} $\mathsf{\mathcal{R}}$
of $a$ as the set of all substrings of $r$. 

Next we want to show that there is a one-to-one correspondence between
the run blocks of $a$ and the substrings of the run length encoding
$r$. If $b$ is a run block of $a$ of length $k$, then we have
$b=a(i_{p},i_{q})$ for some $1\leq p<q\leq\gamma+1$ and the run
block $b=a(i_{p},i_{q})$ corresponds to the substring $r(p,q)$.
The run length encoding gives us therefore a mapping $\Phi$ from
the set $\mathbb{\mathsf{\mathscr{\mathcal{B}}}}$ of all run blocks
of $a$ to the aperiodic run structure $\mathsf{\mathcal{R}}$ of
$a$ defined by $\Phi(a(i_{p},i_{q}))=r(p,q)$. Note that $\Phi$
is bijective and that $\Phi$ maps an inner run block of $a$ to an
inner substring of $r$. Furthermore for $k=1,\cdots,n$ let $\mathbb{\mathsf{\mathscr{\mathcal{B}}}}_{k}$
denote the set of all run blocks of $a$ with length $k$ and let
$\mathsf{\mathcal{R}}_{k}$ denote the set of all substrings $r(p,q)$
of $r$ whose sum $\sum_{j=p}^{q-1}r_{j}$ is equal to $k$; obviously
$\mathbb{\mathsf{\mathscr{\mathcal{B}}}}_{k}\subseteq\mathfrak{\mathcal{B}}$
and $\mathsf{\mathcal{R}}_{k}\subseteq\mathcal{R}.$ If $b=a(i_{p},i_{q})$
is a run block of $a$ with length $k$, then we have 
\begin{equation}
i_{q}-i_{p}=\sum_{j=p}^{q-1}r_{j}=k\label{eq:iq}
\end{equation}
and hence 
\begin{equation}
\Phi(\mathbb{\mathsf{\mathscr{\mathcal{B}}}}_{k})=\mathsf{\mathcal{\mathcal{R}}}_{k}.\label{eq:map1}
\end{equation}
 Thus each element $u\in\mathsf{\mathcal{\mathcal{R}}}_{k}$ (i.e.\ each
substring $u$ of $r$ whose sum is equal to $k$) corresponds uniquely
to a run block of length $k$ consisting of $|u|$ consecutive runs,
and vice versa. 

For a substring $u$ of $r$ let

\begin{equation}
\alpha(u):=\begin{cases}
2\; & \mbox{if }u\mbox{ is an inner substring of \ensuremath{r}}\\
1\; & \mbox{otherwise}.
\end{cases}\label{eq:alpha-def}
\end{equation}

If the run block $a(i_{p},i_{q})$ consists of $m$ consecutive runs,
then $a_{i_{p}}\cdot a_{i_{q}-1}=-(-1)^{m}$ as already noted. Since
$\Phi(a(i_{p},i_{q}))=r(p,q)$ we have $m=|r(p,q)|$ and it follows
from (\ref{eq:weight}) that $w(a(i_{p},i_{q}))=-\alpha(r(p,q))\cdot(-1)^{|r(p,q)|}$.
Since the mapping $\Phi$ is bijective, (\ref{eq:map1}) together
with the last remark of the previous subsection gives us that
\begin{equation}
\sum_{u\in\mathcal{\mathit{\mathcal{R}}}_{k}}\alpha(u)\cdot(-1)^{|u|}=-\sum_{b\in\mathcal{B}_{k}}w(b).\label{eq:wbls}
\end{equation}

\section{The Main Result for the Aperiodic Case}

In this section we want to analyze the connection between the aperiodic
autocorrelations $C_{k}(a)$ of the binary sequence $a$ and its run
structure given by the run length encoding $r=(r_{1},r_{2},\cdots,r_{\gamma})$.
Note that $\sum_{j=1}^{\gamma}r_{j}=|a|=n=C_{0}(a)$. Moreover, we
have
\begin{equation}
C_{1}(a)=n+1-2\gamma\label{eq:C1}
\end{equation}
since for each $j=1,2,\cdots,\gamma-1$ the $j$th run of $a$ contributes
$r_{j}-2$ to the sum $C_{1}(a)$ whereas the last run contributes
$r_{\gamma}-1$ to the sum $C_{1}(a)$. Hence we have $C_{1}(a)=\sum_{j=1}^{\gamma-1}(r_{j}-2)+(r_{\gamma}-1)=1+\sum_{j=1}^{\gamma}(r_{j}-2)=1+n-2\gamma$.

For $k=1,2,\cdots,n-1$ put 
\begin{equation}
R_{k}:=\sum_{u\in\mathcal{\mathit{\mathcal{R}}}_{k}}\alpha(u)\cdot(-1)^{|u|}\label{eq:RkDef}
\end{equation}
and $R(a):=(R_{1},R_{2},\cdots,R_{n-1})$; we call $R(a)$ the \emph{run
vector} of $a$. Thus by (\ref{eq:wbls}) we have 
\begin{equation}
R_{k}=-\sum_{b\in\mathcal{B}_{k}}w(b).\label{eq:RkBlock}
\end{equation}
Therefore, in order to calculate $R_{k}$ all consecutive runs of
$a$ with a total length of $k$ (i.e.\ all run blocks of $a$ with
length $k$) have to be considered. Each run block $b$ of length
$k$ contributes the (negative) weight $-w(b)$ as defined in (\ref{eq:weight})
to the sum in (\ref{eq:RkBlock}). As already noted $|w(b)|$ equals
1 if $b$ is an outer run block, $|w(b)|$ equals 2 if $b$ is an
inner run block and the sign of $w(b)$ depends only on the number
of runs: we have $w(b)>0$ if the run block $b$ consists of an odd
number of consecutive runs and $w(b)<0$ if $b$ consists of an even
number of consecutive runs.

Let us for example compute the run vector $R(a)$ for the three sequences
of the previous example. If $r=(6,7)$, then apart from $a$ itself
there are only two run blocks; they both consist of a single run,
their length is 6 resp.\ 7 and both are outer run blocks. Thus we
have $\mathsf{\mathcal{R}}_{k}=\textrm{Ø}$ for $k\neq6,\,7$ and
$\mathsf{\mathcal{R}}_{6}=\left\{ r(1,2)\right\} $, $\mathsf{\mathcal{R}}_{7}=\left\{ r(2,3)\right\} $;
hence by (\ref{eq:RkDef}) $R(a)=(0,0,0,0,0,-1,-1,0,0,0,0,0)$.

For $r=(7,3,3)$ there are three run blocks consisting of a single
run; their length is 7, 3 and 3, their weight -1, -2 and -1 and they
correspond to the substrings $r(1,2)$, $r(2,3)$ and $r(3,4)$. In
addition there are two (outer) run blocks consisting of two consecutive
runs. Their length is 10 resp.\ 6, both have weight 1 and they correspond
to the substrings $r(1,3)$ and $r(2,4)$. Altogether, we have $\mathsf{\mathcal{R}}_{3}=\left\{ r(2,3),r(3,4)\right\} $,
$\mathsf{\mathcal{R}}_{6}=\left\{ r(2,4)\right\} $, $\mathsf{\mathcal{R}}_{7}=\left\{ r(1,2)\right\} $,
$\mathsf{\mathcal{R}}_{10}=\left\{ r(1,3)\right\} $ and $\mathsf{\mathcal{R}}_{k}=\textrm{Ø}$
for the remaining cases $k=1,2,4,5,8,9,11,12$; this shows that $R(a)=(0,0,-3,0,0,1,-1,0,0,1,0,0)$.

Finally, we consider the example $r=(3,6,3,3)$. In order to calculate
for instance $R_{6}$ we have to consider all consecutive runs of
$a$ which have a total length (i.e.\ run block length) of 6. There
are exactly two run blocks of $a$ with length 6. The first one consists
of a single run, has weight -2 and corresponds to the inner substrings
$r(2,3)$; the second one consists of two consecutive runs, has weight
1 and corresponds to the outer substring $r(3,5)$. This shows that
$\mathsf{\mathcal{R}}_{6}=\left\{ r(2,3),r(3,5)\right\} $ and $R_{6}=-1$.
Similarly, we have $\mathsf{\mathcal{R}}_{3}=\left\{ r(1,2),r(3,4),r(4,5)\right\} $,
$\mathsf{\mathcal{R}}_{9}=\left\{ r(1,3),r(2,4)\right\} $, $\mathsf{\mathcal{R}}_{12}=\left\{ r(1,4),r(2,5)\right\} $
and $\mathsf{\mathcal{R}}_{k}=\textrm{Ø}$ if $k$ is not a multiple
of 3. Hence it follows that $R(a)=(0,0,-4,0,0,-1,0,0,3,0,0,-2,0,0)$. 

The next theorem (cf.\ \cite{polge673421}) is the main result for
the aperiodic case. It shows that for a binary sequence $a$ the aperiodic
autocorrelations and and the run vector $R(a)$ are closely related.
\begin{thm}
\label{thm:main_aperiodic}$\mbox{Let }k=1,\cdots,n-1$; then
\[
C_{k+1}(a)-2C_{k}(a)+C_{k-1}(a)=-2R_{k}.
\]
\end{thm}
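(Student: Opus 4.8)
The plan is to recognize the left-hand side as a second forward difference $\triangle^{2}$ of the autocorrelation sequence $C$ and to re-express it as a bilinear form in the first-difference sequence of $a$ itself. Concretely, I introduce $d_{j}:=a_{j}-a_{j-1}$ for $j=1,\dots,n+1$, using the boundary convention $a_{0}=a_{n+1}=0$, and aim first to establish the intermediate identity
\[
C_{k+1}(a)-2C_{k}(a)+C_{k-1}(a)=-\sum_{i=1}^{n-k+1}d_{i}\,d_{i+k}.
\]
Once this is in hand, the theorem will follow by matching the nonzero products $d_{i}d_{i+k}$ against the run blocks of length $k$ and invoking (\ref{eq:RkBlock}).

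To obtain the identity I would align the three defining sums of $C_{k+1},C_{k},C_{k-1}$ over the common index range $i=1,\dots,n-k$. Here $C_{k+1}=\sum_{i=1}^{n-k}a_{i}a_{i+k+1}$ once the top term is absorbed using $a_{n+1}=0$, the sum for $C_{k}$ already has this range, and $C_{k-1}$ contributes a single extra boundary term $a_{n-k+1}a_{n}$ that must be peeled off. Writing $a_{i+k+1}-2a_{i+k}+a_{i+k-1}=d_{i+k+1}-d_{i+k}$ turns the combination into a telescoping difference of two sums in $d$; after a single shift of the summation index, together with the relations $a_{i-1}-a_{i}=-d_{i}$ and $a_{1}=d_{1}$, everything collapses, and precisely here the leftover term $a_{n-k+1}a_{n}$ supplies the missing summand that completes the range to $i=1,\dots,n-k+1$. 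I expect this boundary bookkeeping to be the main obstacle: the argument hinges on the fact that the two padded boundaries at $i=1$ and $i=n+1$ behave differently from the interior ones, and it is exactly the convention $a_{0}=a_{n+1}=0$ that makes the single stray term fall into place.

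With the identity established, I would examine the support of $d$. Since $d_{j}$ vanishes inside a run and is nonzero exactly at the run-start positions $1=i_{1}<i_{2}<\cdots<i_{\gamma+1}=n+1$, a product $d_{i}d_{i+k}$ can be nonzero only when $i=i_{p}$ and $i+k=i_{q}$ for some $p<q$; by (\ref{eq:iq}) this says $i_{q}-i_{p}=k$, i.e.\ $a(i_{p},i_{q})$ is a run block of length $k$. Moreover $|d_{i_{l}}|=2$ at each interior boundary (where $a_{i_{l}-1}=-a_{i_{l}}$) and $|d_{i_{l}}|=1$ at the two padded ends $i_{1},i_{\gamma+1}$. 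A short case check---inner block (both endpoints interior, factor $4$) versus outer block ($p=1$ or $q=\gamma+1$, factor $2$), with the sign supplied by $a_{i_{q}-1}=-a_{i_{q}}$---shows in every case that $d_{i_{p}}d_{i_{q}}=-2\,w(a(i_{p},i_{q}))$, exactly matching the inner/outer weights of (\ref{eq:weight}); the full-sequence block $a(1,n+1)$ has length $n>k$ and so never occurs for $k\le n-1$.

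Summing over all run blocks of length $k$ then gives $\sum_{i=1}^{n-k+1}d_{i}d_{i+k}=-2\sum_{b\in\mathcal{B}_{k}}w(b)=2R_{k}$ by (\ref{eq:RkBlock}), and combining this with the intermediate identity yields $C_{k+1}(a)-2C_{k}(a)+C_{k-1}(a)=-2R_{k}$, as claimed.
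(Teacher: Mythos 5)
Your proof is correct and takes essentially the same route as the paper: your $d$ is the paper's difference sequence $\delta$, your intermediate identity is precisely Lemma \ref{lem:delta-Ck}, and your identification $d_{i_p}d_{i_q}=-2\,w(a(i_p,i_q))$ together with (\ref{eq:RkBlock}) is exactly how the paper concludes. The only cosmetic difference is that the paper derives $\delta_i\delta_j=-2w(a(i,j))$ in one algebraic step rather than by your inner/outer case check.
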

\begin{proof}
Let $\delta:=(\delta_{1},\delta_{2},\cdots,\delta_{n+1})$ be the
sequence defined by $\delta_{i}:=a_{i}-a_{i-1}$ for all $i=1,2,\cdots,n+1.$
Since $a(i,j)$ is a run block of\emph{ $a$ }if and only if $a_{i-1}\neq a_{i}$
and $a_{j-1}\neq a_{j}$ it follows that
\begin{equation}
a(i,j)\mbox{ is a block of }a\;\;\Leftrightarrow\;\;\delta_{i}\cdot\delta_{j}\neq0\label{eq:block}
\end{equation}

Now let $1\leq i<j\leq n+1$ with $j-i<n.$ If $a(i,j)$ is a run
block, then $a_{i}a_{j}+a_{i-1}a_{j-1}=-a_{i}a_{j-1}-a_{i-1}a_{j}$;
with (\ref{eq:block}) this shows that 
\begin{eqnarray*}
\delta_{i}\delta_{j} & = & (a_{i}-a_{i-1})(a_{j}-a_{j-1})\\
 & = & a_{i}a_{j}-a_{i}a_{j-1}-a_{i-1}a_{j}+a_{i-1}a_{j-1}\\
 & = & -2(a_{i}a_{j-1}+a_{i-1}a_{j})\\
 & = & -2w(a(i,j))
\end{eqnarray*}

Hence by (\ref{eq:wbls}) 
\begin{eqnarray*}
2R_{k} & = & 2\sum_{u\in\mathcal{\mathit{\mathcal{R}}}_{k}}\alpha(u)\cdot(-1)^{|u|}=-2\sum_{b\in\mathcal{B}_{k}}w(b)\\
 & = & -2\sum_{i=1}^{n+1-k}w(a(i,i+k))\\
 & = & \sum_{i=1}^{n+1-k}\delta_{i}\delta_{i+k}=C_{k}(\delta).
\end{eqnarray*}
Now Theorem \ref{thm:main_aperiodic} follows directly from the following
lemma.\end{proof}
\begin{lem}
\label{lem:delta-Ck}Let $\delta=(\delta_{1},\delta_{2},\cdots,\delta_{n+1})$
be the sequence defined by $\delta_{i}:=a_{i}-a_{i-1}$ for all \textup{$i=1,2,\cdots,n+1$;}
then \textup{$\mbox{for }k=1,2,\cdots,n-1$} 
\[
C_{k+1}(a)-2C_{k}(a)+C_{k-1}(a)=-C_{k}(\delta).
\]
 \end{lem}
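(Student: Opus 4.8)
The plan is to compute $C_k(\delta)$ directly from its definition and show it collapses to the second difference of the $C(a)$ sequence, purely by expanding a product and reindexing. First I would substitute $\delta_i = a_i - a_{i-1}$ into $C_k(\delta)=\sum_{i=1}^{n+1-k}\delta_i\delta_{i+k}$ and expand each summand $(a_i-a_{i-1})(a_{i+k}-a_{i+k-1})$ into four terms, giving $C_k(\delta)=S_1-S_2-S_3+S_4$ where $S_1=\sum_i a_i a_{i+k}$, $S_2=\sum_i a_i a_{i+k-1}$, $S_3=\sum_i a_{i-1}a_{i+k}$, and $S_4=\sum_i a_{i-1}a_{i+k-1}$, all summed over $i=1,\dots,n+1-k$.

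Next I would identify each $S_j$ with an autocorrelation of $a$, the key tool being the boundary convention $a_0=a_{n+1}=0$ fixed in the preliminaries, which lets me freely truncate or extend a summation range whenever a term carries the index $0$ or $n+1$. For $S_1$ the top term ($i=n+1-k$) is $a_{n+1-k}a_{n+1}=0$, so $S_1=\sum_{i=1}^{n-k}a_i a_{i+k}=C_k(a)$; reindexing $S_4$ by $j=i-1$ and dropping the vanishing $j=0$ term gives $S_4=C_k(a)$ as well. The two cross terms are slightly more delicate: $S_2$ already runs over exactly the range $i=1,\dots,n-k+1$, so $S_2=C_{k-1}(a)$ with no adjustment, whereas $S_3$ after the shift $j=i-1$ becomes $\sum_{j=0}^{n-k}a_j a_{j+k+1}$, and discarding the boundary terms at $j=0$ (which is $a_0 a_{k+1}$) and $j=n-k$ (which is $a_{n-k}a_{n+1}$) yields $S_3=C_{k+1}(a)$.

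Substituting these four identifications gives $C_k(\delta)=C_k(a)-C_{k-1}(a)-C_{k+1}(a)+C_k(a)=2C_k(a)-C_{k-1}(a)-C_{k+1}(a)$, and negating both sides is exactly the claimed identity $C_{k+1}(a)-2C_k(a)+C_{k-1}(a)=-C_k(\delta)$. The only genuine obstacle is the bookkeeping at the ends of the range: I must make sure every range truncation is licensed by a zero boundary value, and in particular check that the extremal parameter $k=n-1$ needs no separate argument because the convention $C_n(a):=0$ is precisely what $S_3$ produces there (and $k=1$ is handled by $a_0=0$). Once one verifies that the conventions absorb all four boundary contributions, no case distinction is required and the computation goes through uniformly for $k=1,\dots,n-1$.
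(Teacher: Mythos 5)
Your proposal is correct and is essentially the paper's own proof: both expand $C_{k}(\delta)=\sum_{i}(a_{i}-a_{i-1})(a_{i+k}-a_{i+k-1})$ into the same four sums, identify them with $C_{k}(a)$, $C_{k-1}(a)$, $C_{k+1}(a)$, $C_{k}(a)$ after absorbing the boundary terms $a_{0}a_{k}$, $a_{0}a_{k+1}$, $a_{n+1-k}a_{n+1}$, $a_{n-k}a_{n+1}$ via the convention $a_{0}=a_{n+1}=0$, and conclude. Your bookkeeping at the ends of the range, including the case $k=n-1$ handled by $C_{n}(a):=0$, matches the paper's computation exactly.
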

\begin{proof}
Let $1\leq k\leq n-1$; then 
\[
\begin{split}C_{k}(\delta)\\
= & \sum_{i=1}^{n+1-k}\delta_{i}\delta_{i+k}=\sum_{i=1}^{n+1-k}(a_{i}-a_{i-1})(a_{i+k}-a_{i+k-1})\\
= & \sum_{i=1}^{n+1-k}a_{i}a_{i+k}-a_{i}a_{i+k-1}-a_{i-1}a_{i+k}+a_{i-1}a_{i+k-1}\\
= & (C_{k}(a)+a_{n+1-k}a_{n+1})-C_{k-1}(a)\\
 & -(a_{0}a_{k+1}+C_{k+1}(a)+a_{n-k}a_{n+1})\\
 & +(a_{0}a_{k}+C_{k}(a))\\
= & -C_{k+1}(a)+2C_{k}(a)-C_{k-1}(a).
\end{split}
\]

\end{proof}
We have $C_{0}(a)=n$ and $C_{1}(a)=n+1-2\gamma$ by (\ref{eq:C1});
Theorem \ref{thm:main_aperiodic} gives us that $C_{2}(a)=2C_{1}(a)-n-2R_{1}$
and thus $C_{2}(a)=n+2-4\gamma-2R_{1}$. Furthermore Theorem \ref{thm:main_aperiodic}
shows that $C_{k+1}(a)=2C_{k}(a)-C_{k-1}(a)-2R_{k}\;\;\mbox{for }k=1,2,\cdots,n-1$. 

Let us denote by $C(a):=(C_{0}(a),C_{1}(a),\cdots,C_{n}(a))$ the
\emph{aperiodic autocorrelation vector} of $a$. We can now use Theorem
\ref{thm:main_aperiodic} in order to compute $C(a)$ for the three
sequences of the previous examples. As we have seen, if $r=(6,7)$
then we have $R(a)=(0,0,0,0,0,-1,-1,0,0,0,0,0)$. It follows that
$C_{0}(a)=n=13$, $C_{1}(a)=n+1-2\gamma=13+1-2\cdot2=10$, $C_{2}(a)=2C_{1}(a)-n-2R_{1}=20-13=7$
, $C_{3}(a)=2C_{2}(a)-C_{1}(a)-2R_{2}=4$ and so on, which gives us
$C(a)=(13,10,7,4,1,-2,-5,-6,-5,-4,-3,-2,-1,0)$. Furthermore, by a
simple calculation we get $C(a)=(13,8,3,-2,-1,0,1,0,1,2,3,2,1,0)$
if $r=(7,3,3)$ and $C(a)=(15,8,1,-6,-5,-4,-3,0,3,6,3,0,-3,-2,-1,0)$
if $r=(3,6,3,3)$. 

Theorem \ref{thm:main_aperiodic} can be rephrased by using the difference
operator. The (\emph{forward}) \emph{difference operator} $\triangle(b)$
of a sequence $b=(b_{1},b_{2},\cdots,b_{m})$ with $m\geq2$ is given
by $\triangle(b):=(b_{2}-b_{1},b_{3}-b_{2},\cdots,b_{m}-b_{m-1})$.
For $m\geq3$ we have $\triangle^{2}(b):=\triangle(\triangle(b))=(b_{3}-2b_{2}+b_{1},b_{4}-2b_{3}+b_{2},\cdots,b_{m}-2b_{m-1}+b_{m-2})$.
\begin{cor}
\label{cor:-delta}$\triangle^{2}(C(a))=-2R(a)\;$ for $n\geq3$.\end{cor}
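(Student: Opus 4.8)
The plan is to read Corollary \ref{cor:-delta} as nothing more than a componentwise repackaging of Theorem \ref{thm:main_aperiodic}, so the entire argument amounts to unwinding the definition of $\triangle^2$ and checking that indices and lengths line up.

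First I would record the lengths involved. The autocorrelation vector $C(a)=(C_0(a),C_1(a),\cdots,C_n(a))$ has length $n+1$, so for $n\geq 3$ we have $n+1\geq 4\geq 3$ and the second difference $\triangle^2(C(a))$ is well defined; by the displayed formula for $\triangle^2$ it is a sequence of length $(n+1)-2=n-1$. Applying that formula to $b=C(a)$, its $k$th component, for $k=1,2,\cdots,n-1$, is exactly $C_{k+1}(a)-2C_k(a)+C_{k-1}(a)$.

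Next I would invoke Theorem \ref{thm:main_aperiodic}, which asserts precisely that $C_{k+1}(a)-2C_k(a)+C_{k-1}(a)=-2R_k$ for each $k=1,2,\cdots,n-1$. Substituting this into the previous step shows that the $k$th component of $\triangle^2(C(a))$ equals $-2R_k$, which is exactly the $k$th component of $-2R(a)$, since $R(a)=(R_1,R_2,\cdots,R_{n-1})$.

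Finally I would observe that $\triangle^2(C(a))$ and $-2R(a)$ are both sequences of length $n-1$ and that their components agree for every index $k$, whence the two sequences are equal. The only point deserving care is this bookkeeping: one must confirm that the index range $k=1,\cdots,n-1$ over which the $k$th entry of $\triangle^2(C(a))$ is defined coincides with the range over which $R_k$ is defined, and that the length $n+1$ of $C(a)$ produces a second difference of exactly the length $n-1$ of the run vector. I do not expect any genuine obstacle beyond verifying this alignment, since the substantive content is already carried by Theorem \ref{thm:main_aperiodic}.
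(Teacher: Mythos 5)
Your proposal is correct and is essentially the paper's own proof, which simply notes that the corollary is a reformulation of Theorem \ref{thm:main_aperiodic}; you have only made the index and length bookkeeping explicit. Nothing further is needed.
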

\begin{proof}
This is just a reformulation of Theorem \ref{thm:main_aperiodic}.\end{proof}
\begin{rem*}
If we put $\overline{a}:=(0,a_{1},a_{2},\cdots,a_{n},0)$, then Lemma
\ref{lem:delta-Ck} shows that for $n\geq3$
\[
\triangle^{2}(C(a))=-(C_{1}(\triangle(\overline{a})),C_{2}(\triangle(\overline{a})),\cdots,C_{n-1}(\triangle(\overline{a}))).
\]
 A more explicit relationship between the aperiodic autocorrelation
and the run vector gives the next result. Note that by (\ref{eq:C1})
$C_{1}(a)-C_{0}(a)=1-2\gamma$. \end{rem*}
\begin{cor}
\label{cor:-Ck}$\mbox{Let }k=0,1,\cdots,n$; then 
\[
C_{k}(a)=n+(1-2\gamma)k-2\sum_{j=1}^{k-1}(k-j)R_{j}.
\]
\end{cor}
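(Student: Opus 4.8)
The plan is to derive the closed form in Corollary \ref{cor:-Ck} from the recursion encoded in Theorem \ref{thm:main_aperiodic}, treating it as a discrete analogue of integrating a second derivative twice. Writing $\triangle^{2}(C(a)) = -2R(a)$ means that the second forward difference of the sequence $C_0(a), C_1(a), \ldots, C_n(a)$ is known at each interior index, and the two "constants of integration" are fixed by the boundary data $C_0(a) = n$ and $C_1(a) - C_0(a) = 1 - 2\gamma$ (the latter from (\ref{eq:C1})). The goal is to recover each $C_k(a)$ explicitly as an affine function of $k$ plus a weighted sum of the $R_j$.

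First I would set up the telescoping. Let $D_k := C_k(a) - C_{k-1}(a)$ denote the first difference, so that $D_1 = 1 - 2\gamma$ by (\ref{eq:C1}). Theorem \ref{thm:main_aperiodic} states $C_{k+1}(a) - 2C_k(a) + C_{k-1}(a) = -2R_k$, which is exactly $D_{k+1} - D_k = -2R_k$ for $k = 1, \ldots, n-1$. Summing this from $1$ to $m-1$ telescopes the left side and yields $D_m = D_1 - 2\sum_{j=1}^{m-1} R_j = (1 - 2\gamma) - 2\sum_{j=1}^{m-1} R_j$ for each $m = 1, \ldots, n$ (with the empty sum giving the base case $D_1$ correctly).

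Next I would sum the first differences a second time. Since $C_k(a) = C_0(a) + \sum_{m=1}^{k} D_m = n + \sum_{m=1}^{k} D_m$, substituting the expression for $D_m$ gives
\[
C_k(a) = n + \sum_{m=1}^{k}\Bigl[(1-2\gamma) - 2\sum_{j=1}^{m-1} R_j\Bigr] = n + (1-2\gamma)k - 2\sum_{m=1}^{k}\sum_{j=1}^{m-1} R_j.
\]
The only remaining task is to simplify the double sum by switching the order of summation: each $R_j$ (with $1 \le j \le k-1$) appears once for every $m$ with $j < m \le k$, i.e.\ for exactly $k - j$ values of $m$. Hence $\sum_{m=1}^{k}\sum_{j=1}^{m-1} R_j = \sum_{j=1}^{k-1} (k-j) R_j$, which produces precisely the claimed formula.

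I do not expect a serious obstacle here, since everything reduces to disciplined telescoping and a Fubini-style reindexing of a finite double sum. The one point requiring care is bookkeeping at the boundaries—verifying that the formula holds at $k=0$ and $k=1$ (where the sums are empty) and that the index ranges in Theorem \ref{thm:main_aperiodic} ($k = 1, \ldots, n-1$) cover exactly the differences $D_2, \ldots, D_n$ needed. One should also confirm consistency with $C_n(a) = 0$, though this is automatic from the identity rather than an additional hypothesis.
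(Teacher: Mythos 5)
Your proof is correct and follows essentially the same route as the paper: the paper checks $k=0,1$ from the boundary data and then invokes Lemma \ref{lem:delta} (the general second-difference inversion formula, proved there by induction on $k$), which is exactly the identity you re-derive inline by double telescoping and interchanging the order of summation. The only difference is that you prove the lemma's content by explicit summation rather than induction, which is a cosmetic variation.
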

\begin{proof}
Since $C_{0}(a)=n$ and by (\ref{eq:C1}) $C_{1}(a)=n+1-2\gamma$
the statement is true for $k=0,1$. For $k\geq2$ it follows directly
from $\triangle^{2}(C(a))=-2R(a)$ and the next Lemma.\end{proof}
\begin{lem}
\label{lem:delta}For $m\geq3$ let $b=(b_{1},b_{2},\cdots,b_{m})$
be a sequence of real numbers and let $\epsilon_{j}$ denote the $j$th
component of $\triangle^{2}(b)$, i.e.\textup{\ $\triangle^{2}(b)=(\epsilon_{1},\epsilon_{2},\cdots,\epsilon_{m-2})$;}
then for $k=0,1,\cdots,m-1$ 
\[
b_{k+1}=b_{1}+(b_{2}-b_{1})k+\sum_{j=1}^{k-1}(k-j)\epsilon_{j}.
\]
\end{lem}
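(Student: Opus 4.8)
The plan is to prove this by summing the second differences twice, using only that $\triangle^{2}$ is $\triangle$ applied to $\triangle$. First I would introduce the first-difference sequence $d_{i}:=b_{i+1}-b_{i}$ for $i=1,\dots,m-1$, so that $\triangle(b)=(d_{1},\dots,d_{m-1})$ and, by the definition of the second difference, $\epsilon_{j}=d_{j+1}-d_{j}$ for $j=1,\dots,m-2$. Telescoping these relations gives
\[
d_{i}=d_{1}+\sum_{j=1}^{i-1}\epsilon_{j}\qquad(i=1,\dots,m-1),
\]
where $d_{1}=b_{2}-b_{1}$ and the empty-sum convention covers the case $i=1$.

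The second step is to recover $b_{k+1}$ from the first differences by telescoping once more: for $k=0,1,\dots,m-1$ we have $b_{k+1}=b_{1}+\sum_{i=1}^{k}d_{i}$ (an empty sum when $k=0$). Substituting the expression for $d_{i}$ then yields
\[
b_{k+1}=b_{1}+k\,d_{1}+\sum_{i=1}^{k}\sum_{j=1}^{i-1}\epsilon_{j}.
\]
It remains to simplify the double sum, and this is the only place requiring care. For a fixed $j$ with $1\le j\le k-1$, the term $\epsilon_{j}$ occurs exactly for those $i$ with $j+1\le i\le k$, that is, in $k-j$ of the inner summands; swapping the order of summation therefore gives $\sum_{i=1}^{k}\sum_{j=1}^{i-1}\epsilon_{j}=\sum_{j=1}^{k-1}(k-j)\epsilon_{j}$. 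Combining this with $d_{1}=b_{2}-b_{1}$ produces the claimed identity.

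I expect no substantial obstacle here: both telescoping steps are routine, and the only thing to watch is the bookkeeping of the index ranges in the interchange of summation, together with the degenerate cases $k=0$ and $k=1$ (where the stated sum is empty and the formula reduces to $b_{1}=b_{1}$ and $b_{2}=b_{1}+(b_{2}-b_{1})$). As an alternative that sidesteps the double-sum manipulation entirely, one could instead argue by induction on $k$, using the recurrence $b_{k+2}=2b_{k+1}-b_{k}+\epsilon_{k}$ that comes directly from the definition of $\epsilon_{k}$; verifying that the right-hand side of the claimed formula satisfies the same recurrence and the same initial values $k=0,1$ then finishes the proof.
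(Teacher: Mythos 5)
Your argument is correct. Where the paper disposes of this lemma in one line (``this can be easily proved by induction on $k$''), you give a complete direct derivation: writing $d_{i}:=b_{i+1}-b_{i}$, telescoping $\epsilon_{j}=d_{j+1}-d_{j}$ to get $d_{i}=d_{1}+\sum_{j=1}^{i-1}\epsilon_{j}$, telescoping again to get $b_{k+1}=b_{1}+\sum_{i=1}^{k}d_{i}$, and then interchanging the order of summation so that each $\epsilon_{j}$ with $1\le j\le k-1$ is counted once for each $i$ with $j+1\le i\le k$, i.e.\ $k-j$ times. The index bookkeeping checks out (for $k\le m-1$ you only ever need $d_{i}$ with $i\le m-1$ and $\epsilon_{j}$ with $j\le m-2$), and the degenerate cases $k=0,1$ are covered by the empty-sum convention. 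What your route buys is that it exhibits the formula as the discrete analogue of integrating the second derivative twice, making the coefficient $k-j$ transparent rather than something one verifies after the fact; the induction you sketch as an alternative, based on the recurrence $b_{k+2}=2b_{k+1}-b_{k}+\epsilon_{k}$, is precisely the proof the paper has in mind, and is marginally shorter but less explanatory. Either is acceptable.
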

\begin{proof}
This can be easily proved by induction on $k$.
\end{proof}
Next we want to show that based on the symmetry of the operator $\triangle^{2}$
a complementary formulation of Corollary \ref{cor:-Ck} can be derived.
If $b$ and $\triangle^{2}(b)=(\epsilon_{1},\epsilon_{2},\cdots,\epsilon_{m-2})$
are defined as in Lemma \ref{lem:delta}, then we also have for $k=0,1,\cdots,m-1$
\begin{equation}
b_{k+1}=b_{m}+(b_{m-1}-b_{m})(m-k-1)+\sum_{j=k+1}^{m-2}(j-k)\epsilon_{j}.\label{eq:bkplus1}
\end{equation}
The proof is similar to the proof of Lemma \ref{lem:delta}. By applying
this to $\triangle^{2}(C(a))=-2R(a)$ and noting that $C_{n-1}(a)=(-1)^{\gamma+1}$
and $C_{n}(a)=0$ it follows as in the proof of Corollary \ref{cor:-Ck}
that for $k=0,1,\cdots,n$ 
\begin{equation}
C_{k}(a)=(-1)^{\gamma+1}(n-k)-2\sum_{j=k+1}^{n-1}(j-k)R_{j}.\label{eq:Ck_corresponding}
\end{equation}

Similarly, from $\epsilon_{k}=b_{k+2}-2b_{k+1}+b_{k}$ it easily follows
that $\sum_{j=1}^{m-2}\epsilon_{j}=b_{1}-b_{2}+b_{m}-b_{m-1}$. Applying
this to $\triangle^{2}(C(a))=-2R(a)$ gives us that $2\sum_{k=1}^{n-1}R_{k}=C_{0}(a)-C_{1}(a)+C_{n}(a)-C_{n-1}(a)=1-2\gamma+(-1)^{\gamma+1}$
and hence 

\begin{equation}
\sum_{k=1}^{n-1}R_{k}=\begin{cases}
-\gamma\; & \mbox{if }\gamma\mbox{ even}\\
1-\gamma\; & \mbox{\mbox{if }\ensuremath{\gamma}\mbox{ odd}.}
\end{cases}\label{eq:sum_rk}
\end{equation}

\section{Applying the Results}

In this section we will first present an example where the run vector
$R(a)$ is used in order to establish a relationship between the correlations
of certain related sequences. Next we show how the run vector $R(a)$
can be efficiently calculated based on a simple algorithm. After that
we develop an alternative formula which shows how $R_{k}$ can be
expressed in terms of the sums $r_{1}+r_{2}+\cdots+r_{j}$. The rest
of this section discusses skew-symmetric sequences and their run length
encoding. For a skew-symmetric sequence $a$ we have $C_{k}(a)=0$
whenever $k$ is odd; if $k>0$ is even, then we will show that $C_{k}(a)=R_{k}(a)$.
Moreover, we give a characterization of skew-symmetric sequences in
terms of their run length encoding.

\subsection{A First Example}

For $m>1$ we consider the sequence $b=(b_{1},b_{2},\cdots,b_{nm})$
obtained by repeating each element of $a$ exactly $m$ times so that
the run length encoding $r(b)$ of $b$ is given by $(mr_{1},mr_{2},\cdots,mr_{\gamma})$.
For example $m=2$ gives us $b=(a_{1},a_{1},a_{2},a_{2},\cdots,a_{n},a_{n})$.
We will show that for $0\leq k<n$ and $0\leq s<m$ the aperiodic
autocorrelations of $b$ are given by 
\begin{equation}
C_{km+s}(b)=(m-s)C_{k}(a)+sC_{k+1}(a)\label{eq:Cb_example}
\end{equation}
where we have put as before $C_{n}(a):=0$.

For $0\leq k<n$ and $0\leq s<m$ put $D_{km+s}:=(m-s)C_{k}(a)+sC_{k+1}(a)$,
$D_{nm}=0$ and $D:=(D_{0},D_{1},\cdots,D_{nm}).$ In order to prove
(\ref{eq:Cb_example}) using Lemma \ref{lem:delta} and Corollary
\ref{cor:-delta} it is sufficient to show that
\[
C_{0}(b)=D_{0}\mbox{ , }C_{1}(b)=D_{1}\mbox{ and }\triangle^{2}(D)=-2R(b).
\]

We have $C_{0}(b)=mn=mC_{0}(a)=D_{0}$ and by (\ref{eq:C1}) $C_{1}(b)=mn+1-2\gamma=(m-1)n+n+1-2\gamma=(m-1)C_{0}(a)+C{}_{1}(a)=D_{1}$.
Since $r(b)=(mr_{1},mr_{2},\cdots,mr_{\gamma})$ it is not difficult
to see that for $1\leq j<nm$ 
\[
R_{j}(b)=\begin{cases}
R_{\frac{j}{m}}(a) & \;\mbox{if }j\equiv0\; mod\; m\\
0 & \;\mbox{otherwise.}
\end{cases}
\]

Now let $\triangle^{2}(D)=(\epsilon_{1},\epsilon_{2},\cdots,\epsilon_{nm-1}).$
For $0\leq k<n$ we have $D_{km+s+1}-D_{km+s}=C_{k+1}(a)-C_{k}(a)$
for $0\leq s<m-1$ and also for $s=m-1$. Since $\triangle^{2}(D):=\triangle(\triangle(D))$
we have $\epsilon_{j}=0$ unless $j$ is a multiple of $m$; in this
case we have $\epsilon_{i\cdot m}=C_{i+1}(a)-2C_{i}(a)+C_{i-1}(a)=-2R_{i}(a)=-2R_{i\cdot m}(b)$.
Hence $\triangle^{2}(D)=-2R(b)$.

\subsection{Calculation of the Run Vector}

The results of the previous section show how the run structure and
the aperiodic autocorrelations are related. The main result which
says that $\triangle^{2}(C(a))=-2R(a)$ can be used in order to calculate
the autocorrelation vector $C(a)$. However, the presented form, in
particular (\ref{eq:RkDef}), is not very well suited for practical
purposes. However, $R(a)$ can be efficiently computed based on the
following simple algorithm, which can be easily derived from (\ref{eq:RkDef}).
As a precondition we will assume that the run length encoding $r$
and its length $\gamma$ are known and that each component of the
array $R$ is initialized to zero. After the final step of the algorithm
the array $R$ contains the computed run vector $R(a).$ The algorithm
computes $R(a)$ in two steps. The first step considers only the outer
substrings of the run length encoding $r$ of $a$ and uses the fact
that $|r(1,j)|=\gamma-|r(j,\gamma+1)|$: 

\smallskip{}

\begin{algorithmic} 

\STATE $\hat{\gamma}\leftarrow(-1)^{\gamma}$ 

\STATE$\alpha\leftarrow-1$

\STATE $s\leftarrow0$

\FOR{$j=1$ \TO $\gamma-1$}

\STATE $s\leftarrow s+r_{j}$

\STATE $R_{s}\leftarrow R_{s}+\alpha$

\STATE $R_{n-s}\leftarrow R_{n-s}+\hat{\gamma}\cdot\alpha$

\STATE $\alpha\leftarrow-\alpha$

\ENDFOR

\smallskip{}

The second and final step of the algorithm takes into account all
inner substrings of $r$:

\smallskip{}

\FOR{$i=2$ \TO $\gamma-1$}

\STATE$\alpha\leftarrow-2$

\STATE $s\leftarrow0$

\FOR{$j=i$ \TO $\gamma-1$}

\STATE $s\leftarrow s+r_{j}$

\STATE $R_{s}\leftarrow R_{s}+\alpha$ 

\STATE $\alpha\leftarrow-\alpha$

\ENDFOR

\ENDFOR

\end{algorithmic} 

\smallskip{}

In the first step of the algorithm there are $\gamma-1$ iterations
and we have a total of $\frac{(\gamma-1)\cdot(\gamma-2)}{2}$ iterations
in the second step. In the second step by enrolling the inner loop
into two separate loops any multiplication can be avoided; the same
is true for the loop in the first step. Thus the computation of $R(a)$
requires $4(\gamma-1)+$\allowbreak$(\gamma-1)(\gamma-2)=(\gamma-1)(\gamma+2)$
additions and no multiplications. 

Let us compare this with the direct calculation of the autocorrelations
$C_{1}(a),C_{2}(a),$\allowbreak$\cdots,C_{n-1}(a)$ as defined in
(\ref{eq:ck}) which requires $\frac{n\cdot(n-1)}{2}$ multiplications
and $\frac{(n-1)\cdot(n-2)}{2}$ additions. Now, at this point let
us assume that $\gamma\approx\frac{n}{2}$; we will come back to this
assumption below. If $\gamma\approx\frac{n}{2}$ then the above algorithm
for the computations of $R(a)$ requires approximately $\frac{n^{2}}{4}$
additions and no multiplication. Even if we do not distinguish between
the cost of an addition and a multiplication, then the above algorithms
should be approximately four times faster than the direct calculation
using (\ref{eq:ck}); a very similar result can be found in \cite{polge673421}. 

At least for large $n$ this remains true, even if we consider the
additional cost in order to compute $C(a)$ and the run length encoding
$r$ of $a$: if $-2R(a)$ is already computed then we need less than
$2n$ additions in order to calculate the autocorrelation vector $C(a)$
using $\triangle^{2}(C(a))=-2R(a)$ and not more than $n+\gamma$
additions are necessary in order to calculate $r$.

Let us come back to the assumption that $\gamma\approx\frac{n}{2}$.
Note that this is true for most binary sequences if $n$ is large.
If, however, $\gamma>\frac{n+1}{2}$, then consider the binary sequence
$\bar{a}$ of length $n$ one gets, when inverting every second element
of $a$, i.e.\ $\bar{a}_{i}:=a_{i}$ if $i$ is odd and $\bar{a}_{i}:=-a_{i}$
if $i$ is even. Then $C_{k}(\bar{a})=C_{k}(a)$ if $k$ is even and
$C_{k}(\bar{a})=-C_{k}(a)$ if $k$ is odd. By induction it is not
difficult to show that $\gamma+\bar{\gamma}=n+1$ where $\bar{\gamma}$
denotes the length of the run length encoding of $\bar{a}.$ Thus
if $\gamma>\frac{n+1}{2}$, then we have $\bar{\gamma}\leq\frac{n}{2}$
and $\bar{a}$ has up to every second sign the same aperiodic autocorrelation
vector. So in the above algorithm $\bar{a}$ instead of $a$ can be
used in this case in order to compute the autocorrelation vector $C(a)$.

\subsection{A More Practical Formula for $R_{k}$}

In this subsection we develop a formula which shows how $R_{k}$ can
be expressed in terms of the sums $r_{1}+r_{2}+\cdots+r_{j}$; this
gives rise to a different algorithm for calculating $R_{k}$ which
is for example particularly useful in a branch-and-bound like exhaustive
search where only parts of the sequence $a$ are known. 

For $j=1,2,\cdots,\gamma$ let
\begin{equation}
s_{j}:=r_{1}+r_{2}+\cdots+r_{j}\label{eq:s-def}
\end{equation}

and 
\begin{equation}
t_{j}:=r_{\gamma}+r_{\gamma-1}+\cdots+r_{\gamma-j+1}.\label{eq:t-def}
\end{equation}
Note that then $1\leq s_{1}<s_{2}<\cdots<s_{\gamma}=n$, $1\leq t_{1}<t_{2}<\cdots<t_{\gamma}=n$
and that for $j=1,2,\cdots,\gamma-1$
\begin{equation}
s_{j}+t_{\gamma-j}=n.\label{eq:s+t}
\end{equation}
Furthermore, let 
\begin{equation}
S:=\{s_{1},s_{2},\cdots,s_{\gamma-1}\}\label{eq:S_def}
\end{equation}
and
\begin{equation}
T:=\{t_{1},t_{2},\cdots,t_{\gamma-1}\}.\label{eq:T_def}
\end{equation}
The functions $f_{S},\: f_{T}:\;\mathbb{Z}\rightarrow\{-1,0,1\}$
defined by
\begin{equation}
f_{S}(k):=\begin{cases}
(-1)^{j}\; & \mbox{if }k\in S\mbox{ with \ensuremath{k=s_{j}}}\\
0\; & \mbox{otherwise}
\end{cases}\label{eq:f-def}
\end{equation}
\begin{equation}
f_{T}(k):=\begin{cases}
(-1)^{j}\; & \mbox{if }k\in T\mbox{ with \ensuremath{k=t_{j}}}\\
0\; & \mbox{otherwise}
\end{cases}\label{eq:g-def}
\end{equation}
 will play an important role in the following. Note that by (\ref{eq:s+t})
we have for all $k\mathbb{\in Z}$ 
\begin{equation}
f_{S}(k)=(-1)^{\gamma}f_{T}(n-k).\label{eq:f=00003Dg}
\end{equation}

The next theorem shows how $R_{k}$ can be expressed in terms of $s_{1},s_{2},\cdots,s_{\gamma-1}$.
\begin{thm}
\label{thm:Rk_f}\textup{Let $k=1,2,\cdots,n-1$; then
\[
R_{k}=f_{S}(k)+(-1)^{\gamma}f_{S}(n-k)+2\sum_{j=1}^{\gamma-1}(-1)^{j}f_{S}(s_{j}-k)
\]
}\end{thm}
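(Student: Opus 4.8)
The plan is to express $R_k$ directly from its definition \eqref{eq:RkDef} by partitioning the substrings $u \in \mathcal{R}_k$ according to their endpoints, and then recognizing each resulting sum as a value of the function $f_S$. Recall from the run structure that each $u \in \mathcal{R}_k$ is a substring $r(p,q)$ with $\sum_{j=p}^{q-1} r_j = k$, carrying the contribution $\alpha(u)\cdot(-1)^{|u|}$, where $\alpha(u) = 2$ for inner substrings and $\alpha(u) = 1$ for outer substrings. The crucial observation is that a substring $r(p,q)$ has partial-sum condition $s_{q-1} - s_{p-1} = k$ (setting $s_0 := 0$), and its length is $|u| = q - p$. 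So the whole computation reduces to counting pairs $(p,q)$ with $1 \le p < q \le \gamma+1$ satisfying $s_{q-1} - s_{p-1} = k$, weighted appropriately by the sign $(-1)^{q-p}$ and the inner/outer factor.

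**First I would** split $\mathcal{R}_k$ into three disjoint classes: outer substrings with $p = 1$ (left boundary), outer substrings with $q = \gamma+1$ (right boundary), and inner substrings with $1 < p < q < \gamma+1$. For the left-boundary class, $u = r(1,q)$ with $s_{q-1} = k$, so such a $u$ exists iff $k = s_j$ for some $j \le \gamma-1$ (i.e.\ $k \in S$), in which case $|u| = q - 1 = j$ and the contribution is $1 \cdot (-1)^{j} = f_S(k)$. For the right-boundary class, $u = r(p,\gamma+1)$ with $n - s_{p-1} = k$, i.e.\ $s_{p-1} = n-k$; writing $n - k = s_{\gamma - m}$ and using $s_{p-1} = s_{\gamma-m}$ one gets $|u| = \gamma + 1 - p = m+1$, and after the sign bookkeeping this contributes $(-1)^{\gamma} f_S(n-k)$ — this is exactly where the relation \eqref{eq:f=00003Dg}, $f_S(k) = (-1)^\gamma f_T(n-k)$, together with $s_j + t_{\gamma-j} = n$ from \eqref{eq:s+t}, does the work of converting a $t$-statement back into an $f_S$-statement.

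**The main obstacle** is the inner class. Here $u = r(p,q)$ with $1 < p$ and $q < \gamma+1$, and the condition is $s_{q-1} - s_{p-1} = k$ with both indices strictly interior. The factor is now $\alpha(u) = 2$, giving contribution $2 \cdot (-1)^{q-p}$. I would fix the left endpoint by setting $j := p - 1 \in \{1,\dots,\gamma-1\}$, so that $s_{q-1} = s_j + k$; an inner substring with this left endpoint exists precisely when $s_j + k$ equals some $s_{j'}$ with $j < j' \le \gamma - 1$, i.e.\ when $s_j + k \in S$, equivalently $f_S(s_j + k) \ne 0$. Writing $k = s_{q-1} - s_j$ we would need to check that $(-1)^{q-p} = (-1)^{q-1-j} = (-1)^{j} \cdot (-1)^{q-1} = (-1)^j f_S(s_{q-1}) = (-1)^j f_S(s_j + k)$, where the last equality uses $f_S(s_{q-1}) = (-1)^{q-1}$ by the definition \eqref{eq:f-def}. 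This identifies the per-$j$ contribution as $2(-1)^j f_S(s_j + k)$; summing over $j = 1,\dots,\gamma-1$ then produces the claimed term $2\sum_{j=1}^{\gamma-1}(-1)^j f_S(s_j - k)$, once I reconcile the sign convention in the argument $s_j \pm k$. This reconciliation — matching $s_j + k$ against the stated $s_j - k$ — is the delicate point, and I would resolve it by noting that $f_S$ is only nonzero on $S \subseteq \{1,\dots,n-1\}$, and by carefully tracking whether the index $j'$ with $s_{j'} = s_j + k$ lies above or below $j$; in the regime where the inner substring's left endpoint sits at position $j$, the reindexing $j \mapsto \gamma - j$ via \eqref{eq:s+t} flips $s_j + k$ into the form $s_{j} - k$ appearing in the statement. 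The final step is simply to add the three class-contributions and observe they assemble into the three displayed terms of the theorem.
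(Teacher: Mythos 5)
Your route is genuinely different from the paper's. The paper proves the theorem analytically: it sets $\delta_i:=a_i-a_{i-1}$, uses $2R_k=C_k(\delta)$ from the proof of Theorem \ref{thm:main_aperiodic}, observes $2f_S(i)=a_1\delta_{i+1}$, and splits $C_k(\delta)$ into two boundary terms plus $4\sum_i f_S(i)f_S(i+k)$, finishing with Lemma \ref{lem:sumfS}. You instead work combinatorially straight from the definition \eqref{eq:RkDef}, partitioning $\mathcal{R}_k$ by endpoint type. That part of your argument is sound: the left-boundary substrings $r(1,q)$ contribute exactly $f_S(k)$; the right-boundary substrings $r(p,\gamma+1)$ contribute $(-1)^{\gamma}f_S(n-k)$ (note a small slip: with your convention $n-k=s_{\gamma-m}$ one gets $|u|=\gamma+1-p=m$, not $m+1$; with $|u|=m$ the sign comes out as claimed); and the inner substrings with left endpoint $p=j+1$ contribute $2(-1)^{j}f_S(s_j+k)$, exactly as you compute. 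Your approach buys a transparent combinatorial meaning for each of the three terms of the theorem, at the cost of some case bookkeeping that the paper's $\delta$-trick packages into a single autocorrelation sum.

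The one genuine gap is the final reconciliation of $\sum_{j=1}^{\gamma-1}(-1)^{j}f_S(s_j+k)$ with the stated $\sum_{j=1}^{\gamma-1}(-1)^{j}f_S(s_j-k)$. The mechanism you propose --- reindexing $j\mapsto\gamma-j$ via \eqref{eq:s+t} and \eqref{eq:f=00003Dg} --- does not do what you want: carried out, it converts $\sum_j(-1)^{j}f_S(s_j+k)$ into $\sum_j(-1)^{j}f_T(t_j-k)$, i.e.\ a statement about the reversed partial sums $t_j$ and the function $f_T$, not into the desired $f_S(s_j-k)$ sum. The identity you need is true, but the correct (and very short) proof is the paper's Lemma \ref{lem:sumfS}: since $(-1)^{j}=f_S(s_j)$ and $f_S$ is supported on $S=\{s_1,\dots,s_{\gamma-1}\}$, one has $\sum_j(-1)^{j}f_S(s_j+k)=\sum_{i}f_S(i)f_S(i+k)=\sum_{i}f_S(i-k)f_S(i)=\sum_j(-1)^{j}f_S(s_j-k)$, the middle equality being a shift of the summation index. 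Replace your reindexing step with this change of variables and the proof is complete.
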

\begin{proof}
As in the proof of Theorem \ref{thm:main_aperiodic} let $\delta:=(\delta_{1},\delta_{2},\cdots,\delta_{n+1})$
be the sequence defined by $\delta_{i}:=a_{i}-a_{i-1}$ for all $i=1,2,\cdots,n+1$.
Then $\delta_{1}=a_{1}$, $\delta_{n+1}=(-1)^{\gamma}a_{1}$ and for
$i=1,2,\cdots,n-1$ we have $2f_{S}(i)=a_{1}\delta_{i+1}$. Let $1\leq k<n$.
Hence by Theorem \ref{thm:main_aperiodic} and Lemma \ref{lem:delta-Ck}
\begin{eqnarray*}
R_{k} & = & \frac{1}{2}C_{k}(\delta)=\frac{1}{2}\sum_{i=1}^{n+1-k}\delta_{i}\delta_{i+k}\\
 & = & \frac{1}{2}(\delta_{1}\delta_{k+1}+\delta_{n+1-k}\delta_{n+1}+\sum_{i=2}^{n-k}\delta_{i}\delta_{i+k})\\
 & = & f_{S}(k)+(-1)^{\gamma}f_{S}(n-k)+2\sum_{i=1}^{n-k-1}f_{S}(i)f_{S}(i+k).
\end{eqnarray*}
Since $f_{S}(i)=0$ for $i\geq n$, Theorem \ref{thm:Rk_f} follows
directly from the following Lemma. \end{proof}
\begin{lem}
\label{lem:sumfS}\textup{Let $k=1,2,\cdots,n-1$; then}
\begin{eqnarray*}
\sum_{j=1}^{\gamma-1}(-1)^{j}\cdot f_{S}(s_{j}-k) & = & \sum_{j=1}^{\gamma-1}(-1)^{j}\cdot f_{S}(s_{j}+k)\\
 & = & \sum_{i=1}^{n-1}f_{S}(i+k)\cdot f_{S}(i).
\end{eqnarray*}
\end{lem}
\begin{proof}
We have
\[
\begin{split}\sum_{j=1}^{\gamma-1}(-1)^{j}\cdot f_{S}(s_{j}-k) & =\sum_{j=1}^{\gamma-1}f_{S}(s_{j})\cdot f_{S}(s_{j}-k)=\\
\sum_{i=1}^{n-1}f_{S}(i)\cdot f_{S}(i-k) & =\sum_{i=1}^{n-1}f_{S}(i+k)\cdot f_{S}(i)=\\
\sum_{j=1}^{\gamma-1}f_{S}(s_{j}+k)\cdot f_{S}(s_{j}) & =\sum_{j=1}^{\gamma-1}f_{S}(s_{j}+k)\cdot(-1)^{j}.
\end{split}
\]

\end{proof}
$R_{k}$ can be expressed by Theorem \ref{thm:Rk_f} as the sum of
$\gamma+1$ terms. Each of these terms can be easily computed; for
example by means of a pre-calculated array which holds the values
$f_{S}(i)$ for $i=1,2,\cdots n-1.$ As before we may assume that
$\gamma\approx\frac{n}{2}$ and that $n$ is large. But then, on average
(taken over all binary sequences of length $n$ and all $1\leq k<n$)
only about a quarter of these terms are not zero, since on average
we have $|\{1\leq j<\gamma:\: s_{j}-k>0\;\mbox{\mbox{and}}\; f_{S}(s_{j}-k)\neq0\}|\approx\frac{n-k}{4}$.

If only the first and the last part of the binary sequence $a$ are
known, then we will see that a reformulation of Theorem \ref{thm:Rk_f}
is useful. 
\begin{cor}
\label{cor:Rn-k}Let \textup{$k=1,2,\cdots,n-1$; then} 
\[
(-1)^{\gamma}R_{n-k}=f_{S}(k)+f_{T}(k)+2\sum_{j=1}^{\gamma-1}(-1)^{j}f_{T}(k-s_{j}).
\]
\end{cor}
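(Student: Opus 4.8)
The plan is to derive Corollary \ref{cor:Rn-k} directly from Theorem \ref{thm:Rk_f} by substituting $n-k$ for $k$ and then rewriting each term using the symmetry relation (\ref{eq:f=00003Dg}), namely $f_{S}(m)=(-1)^{\gamma}f_{T}(n-m)$. Setting $k'=n-k$ in Theorem \ref{thm:Rk_f} gives
\[
R_{n-k}=f_{S}(n-k)+(-1)^{\gamma}f_{S}(k)+2\sum_{j=1}^{\gamma-1}(-1)^{j}f_{S}(s_{j}-(n-k)).
\]
Multiplying through by $(-1)^{\gamma}$ and using $(-1)^{2\gamma}=1$ already produces the middle summand $f_{S}(k)$ and turns the first summand $(-1)^{\gamma}f_{S}(n-k)$ into $(-1)^{2\gamma}f_{S}(n-k)=f_{S}(n-k)$; by (\ref{eq:f=00003Dg}) this equals $(-1)^{\gamma}f_{T}(n-(n-k))=(-1)^{\gamma}f_{T}(k)$, which is not quite the claimed $f_{T}(k)$ yet, so the sign bookkeeping must be handled carefully.

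First I would treat the two boundary terms. The term $(-1)^{\gamma}f_{S}(k)$ becomes $f_{S}(k)$ after multiplication by $(-1)^{\gamma}$, matching the first term of the corollary. For the remaining boundary term I would apply (\ref{eq:f=00003Dg}) in the form $f_{S}(n-k)=(-1)^{\gamma}f_{T}(k)$, so that after the overall factor $(-1)^{\gamma}$ it contributes $(-1)^{2\gamma}f_{T}(k)=f_{T}(k)$, exactly the second term. The arithmetic here is purely the interplay of the two sign factors, so I expect no difficulty.

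The main obstacle is the sum. After the factor $(-1)^{\gamma}$, it reads $2\sum_{j=1}^{\gamma-1}(-1)^{j}(-1)^{\gamma}f_{S}(s_{j}-n+k)$. I would rewrite $s_{j}-n+k = k-(n-s_{j}) = k-t_{\gamma-j}$ using the defining relation (\ref{eq:s+t}), namely $s_{j}+t_{\gamma-j}=n$. Applying (\ref{eq:f=00003Dg}) once more in the form $f_{S}(m)=(-1)^{\gamma}f_{T}(n-m)$ with $m=k-t_{\gamma-j}$ converts the $f_{S}$ into $(-1)^{\gamma}f_{T}(n-k+t_{\gamma-j})$; combined with the standing $(-1)^{\gamma}$ this cancels to a plain $f_{T}$. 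The delicate point is then the reindexing: I would substitute $i=\gamma-j$, so that $j$ running from $1$ to $\gamma-1$ corresponds to $i$ running from $\gamma-1$ down to $1$, and track the sign $(-1)^{j}=(-1)^{\gamma-i}=(-1)^{\gamma}(-1)^{i}$ together with the argument rewriting to land on $f_{T}(k-s_{i})$ with coefficient $(-1)^{i}$.

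Putting these pieces together yields $f_{S}(k)+f_{T}(k)+2\sum_{i=1}^{\gamma-1}(-1)^{i}f_{T}(k-s_{i})$, which is the asserted expression for $(-1)^{\gamma}R_{n-k}$. The whole argument is a change of variables plus two applications of (\ref{eq:f=00003Dg}); the only place where an error is likely is in getting every power of $(-1)$ to collapse correctly, so I would carry the sign factors explicitly through each substitution rather than simplifying prematurely.
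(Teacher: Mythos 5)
Your overall strategy --- substitute $n-k$ into Theorem \ref{thm:Rk_f}, multiply by $(-1)^{\gamma}$, and convert $f_{S}$ to $f_{T}$ via (\ref{eq:f=00003Dg}) --- is the paper's strategy, and your treatment of the two boundary terms is correct. But there is a genuine gap in your handling of the sum. After your substitutions the sum reads $2(-1)^{\gamma}\sum_{i=1}^{\gamma-1}(-1)^{i}f_{T}(n-k+t_{i})$, and you assert that the reindexing $i=\gamma-j$ "lands on" $2\sum_{i=1}^{\gamma-1}(-1)^{i}f_{T}(k-s_{i})$. This is not a termwise identity: the arguments $n-k+t_{i}$ and $k-s_{i}=k-n+t_{\gamma-i}$ are reflections about $t_{i}$ (resp.\ a relabelled $t$), and $f_{T}$ is not symmetric about these points. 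For example, with $r=(3,6,3,3)$ and $k=9$ one has $(-1)^{\gamma}(-1)^{2}f_{T}(6+t_{2})=-1$ while $(-1)^{1}f_{T}(9-s_{1})=-1$ and all other terms vanish --- the totals agree but the nonzero contributions sit at different indices, so no reindexing of individual terms can produce the claimed expression. The equality holds only in aggregate, and proving it is exactly the content of Lemma \ref{lem:sumfS} (the identity $\sum_{j}(-1)^{j}f(s_{j}-k)=\sum_{j}(-1)^{j}f(s_{j}+k)$, obtained by writing $(-1)^{j}=f_{S}(s_{j})$ and shifting the summation variable), which the paper explicitly cites as one of the three ingredients and which you never invoke.

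The clean repair, which is what the paper does, is to apply Lemma \ref{lem:sumfS} \emph{before} converting to $f_{T}$: replace $\sum_{j}(-1)^{j}f_{S}(s_{j}-(n-k))$ by $\sum_{j}(-1)^{j}f_{S}(s_{j}+(n-k))$, and then a single application of (\ref{eq:f=00003Dg}) gives $f_{S}(s_{j}+n-k)=(-1)^{\gamma}f_{T}(k-s_{j})$ directly, with no reindexing and no residual $(-1)^{\gamma}$ to absorb. Alternatively you could keep your route but then you must state and prove the $T$-analogue of Lemma \ref{lem:sumfS}; as written, the decisive step is asserted rather than proved.
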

\begin{proof}
The result follows directly from Theorem \ref{thm:Rk_f}, Lemma \ref{lem:sumfS}
and (\ref{eq:f=00003Dg}). 
\end{proof}
Now let $1\leq k<m\leq n$ and let us assume that the first and the
last $m$ elements of $a$ are known. Then all $s_{j}\in S$ with
$s_{j}<m$ and all $t_{j}\in T$ with $t_{j}<m$ can be easily determined;
the same is true for $f_{S}(i)$ and $f_{T}(i)$ for all $i<m$. Hence
Corollary \ref{cor:Rn-k} can be applied in order to calculate $R_{n-k}$.
In particular, this can be applied in a branch-and-bound like exhaustive
search (as for example in \cite{mertens1999exhaustive}) in order
to find binary sequences with low autocorrelations. 

As an example let $r=(5,2,2,1,2,\cdots,5,3,1,4)$ and assume that
only the first 12 and the last 12 elements of $a$ are known; let
us assume further that $\gamma$ is even. Thus we know that $(s_{1},s_{2},s_{3},s_{4})=(5,7,9,10)$,
$(t_{1},t_{2},t_{3})=(4,5,8)$, $s_{5}\geq12$ and $t_{4}\geq12$.
Thus $f_{S}(j)$ has for $j=1,2,\cdots,11$ the following values:
\[
\begin{array}{ccccccccccc}
0 & 0 & 0 & 0 & -1 & 0 & 1 & 0 & -1 & 1 & 0\end{array}
\]
and $f_{T}(j)$ has for $j=1,2,\cdots,11$ the following values:
\[
\begin{array}{ccccccccccc}
0 & 0 & 0 & -1 & 1 & 0 & 0 & -1 & 0 & 0 & 0\end{array}
\]

For $k<12$ let $S_{k}:=\{k-s>0:\: s\in S\}$ and $T_{k}:=S_{k}\cap T$
then

\[
\begin{array}{cclcccc}
S_{11} & = & \{6,4,2,1\} & \;\; & T_{11} & = & \{4\}\\
S_{10} & = & \{5,3,1\} &  & T_{10} & = & \{5\}\\
S_{9} & = & \{4,2\} &  & T_{9} & = & \{4\}\\
S_{8} & = & \{3,1\}\\
S_{7} & = & \{2\}\\
S_{6} & = & \{1\}
\end{array}
\]
$S_{k}=\textrm{Ø}$ for $k<6$ and $T_{k}=\textrm{Ø}$ for $k<9$.
Thus for $g(j):=f_{S}(j)+f_{T}(j)$ we have $g(4)=g(8)=g(9)=-1$ and
$g(7)=g(10)=1$; all other values of $g(j)$ for $j<12$ are zero.
Note that for $1\leq j<\gamma$ we have $f_{T}(k-s_{j})\neq0$ if
and only if $k-s_{j}\in T_{k}$; hence for $R_{n-k}^{(in)}:=2\sum_{j=1}^{\gamma-1}(-1)^{j}f_{T}(k-s_{j})$
it follows that $R_{n-i}^{(in)}=0$ for $i=1,2,\cdots,8$ and that
$R_{n-9}^{(in)}=-2f_{T}(4)=2$, $R_{n-10}^{(in)}=-2f_{T}(5)=-2$ and
$R_{n-11}^{(in)}=2f_{T}(4)=-2$. Therefore, $R_{n-k}=g(k)+R_{n-k}^{(in)}$
has by Corollary \ref{cor:Rn-k} for $k=1,2,\cdots,11$ the following
values:
\[
\begin{array}{ccccccccccc}
0 & 0 & 0 & -1 & 0 & 0 & 1 & -1 & 1 & -1 & -2.\end{array}
\]

\subsection{Skew-Symmetric Sequences and Run Length Encoding}

An odd length binary sequence $a$ of length $2m-1$ is called \emph{skew-symmetric}
if for $i=1,2,\cdots,m-1$ 
\begin{equation}
a_{m-i}=(-1)^{i}a_{m+i}.\label{eq:skew}
\end{equation}

Skew-symmetric sequences are of particular interest in different areas.
For example, consider the \emph{merit factor} $F(a)$ which is defined
by 
\begin{equation}
F(a):=\frac{n^{2}}{2\sum_{k=1}^{n-1}C_{k}^{2}(a)}.\label{eq:meritFactor}
\end{equation}
In many applications it is of interest to collectively minimize the
absolute values of the autocorrelations; the merit factor can be used
as a possible measure, see \cite{borwein2008merit,jedwab2005surveyspringerlink,jungnickel1999perfect}
for a survey on this topic. Let $F_{n}$ be the highest merit factor
possible for all binary sequences of length $n.$ We say that a binary
sequence $a$ has an \emph{optimal merit factor} if $F(a)=F_{n}$
(where as always $n$ denotes the length of $a$). Many of the known
odd length binary sequences with an optimal merit factor are skew-symmetric
and it is even conjectured in \cite{Golay1055653} that a restriction
to skew-symmetric sequences does not change the asymptotic behavior
of $F_{n}$. Furthermore, all odd length Barker sequences are skew-symmetric;
a \emph{Barker sequence }of length $n$ is a binary sequence with
|$C_{k}(a)|\leq1$ for all $k=1,2,\cdots,n-1$. Since for a binary
sequence $a$ we have $C_{k}(a)+C_{n-k}(a)\equiv4\; mod\; n$ for
$k=1,2,\cdots,n-1$, a Barker sequence has an optimal merit factor.

For a skew-symmetric sequence $a$ it is not difficult to see that
$C_{k}(a)=0$ if $k$ is odd. By (\ref{eq:C1}) it follows that $\gamma=\frac{n+1}{2}=m$.
Furthermore, by Theorem \ref{thm:main_aperiodic} we have $C_{k+1}(a)-2C_{k}(a)+C_{k-1}(a)=-2R_{k}\;\;\mbox{for }k=1,2,\cdots,n-1$.
Therefore, $R_{k}(a)=C_{k}(a)$ if $k>0$ is even and $R_{k}(a)=-\frac{1}{2}(C_{k-1}(a)+C_{k+1}(a))$
if $k$ is odd. In particular, for an odd length Barker sequence $a$
we have $R_{k}(a)=(-1)^{k+\gamma+1}$ for $k=2,3,\cdots,n-1$ and
$R_{1}(a)=-\gamma$ if $\gamma$ is odd and $R_{1}(a)=1-\gamma$ if
$\gamma$ is even. 

Next we want to describe skew-symmetric sequences in terms of their
run length encoding. The run length encoding $r$ of $a$ is called
\emph{skew-symmetric} if $a$ itself is skew-symmetric. Note that
this is well defined. We call the run length encoding $r$ \emph{balanced}
if
\begin{equation}
S\cup T=\{1,2,\cdots,n-1\}\;\mbox{and }\; S\cap T=\varnothing.
\end{equation}
 Note that if $r$ is balanced, then $n=2\gamma-1.$ Furthermore,
$r$ is balanced if and only if we have for each $k=1,2,\cdots,n-1$
either $k\in S$ or $n-k\in S$. Hence if $r$ is balanced, then by
(\ref{eq:s+t}) $f_{S}(k)\neq0\Leftrightarrow f_{T}(k)=0\Leftrightarrow f_{S}(n-k)=0$
for $k=1,2,\cdots,n-1$.

We call a run length encoding $r=(r_{1},r_{2},\cdots,r_{\gamma})$
\emph{reducible} if either $r_{1}=1$ and $r_{\gamma}>1$ or $r_{1}>1$
and $r_{\gamma}=1$. If $r$ is reducible, then the \emph{reduced
run length} encoding $\hat{r}$ is given by $\hat{r}:=(r_{2},r_{3},\cdots,r_{\gamma}-1)$
if $r_{1}=1$ and by $\hat{r}:=(r_{1}-1,r_{2},\cdots,r_{\gamma-1})$
if $r_{1}>1$. Given $r$ we will in the following always denote the
so defined reduced run length encoding by $\hat{r}$. Let $\mathsf{\mathcal{L}}$
denote the set of all run length encodings of binary sequences of
length $n\geq1$. Furthermore, let $\mathsf{\mathcal{L}}_{s}$ be
the set of all $r\in\mathcal{L}$ which are skew-symmetric and let
$\mathsf{\mathcal{L}}_{b}$ be the set of all $r\in\mathcal{L}$ which
are balanced.Now let $r$ be a run length encoding with $\gamma>1$;
then it quite easily follows that 
\begin{eqnarray}
r\in\mathcal{L}_{s} & \Leftrightarrow & r\:\mbox{is reducible and \ensuremath{\hat{r}\in\mathcal{L}_{s}}}\label{eq:reduce_skew}\\
r\in\mathcal{L}_{b} & \Leftrightarrow & r\:\mbox{is reducible and \ensuremath{\hat{r}\in\mathcal{L}_{b}}.}\label{eq:reduce_balance}
\end{eqnarray}

We will show that $\mathsf{\mathcal{L}}_{s}=\mathsf{\mathcal{L}}_{b}$,
i.e.\ that a binary sequence is skew-symmetric if and only if its
run length encoding is balanced. In particular, we will inductively
construct a set $\mathsf{\mathcal{I}}$ such that $\mathsf{\mathcal{L}}_{s}=\mathcal{I}$
and $\mathsf{\mathcal{L}}_{b}=\mathcal{I}$. Let $\mathcal{I}^{(1)}:=\{(1)\}$,
i.e.\ $\mathcal{I}^{(1)}$ is the one-element set containing the
run length encoding of the two binary sequences of length 1. For $k\geq1$
let $\mathcal{I}^{(k+1)}$ be the set of all reducible $r\in\mathcal{L}$
with $\hat{r}\in\mathcal{\mathcal{I}}^{(k)}$ and finally let $\mathcal{I}:=\underset{k\geq1}{\bigcup}\mathcal{I}^{(k)}$.
It is easy to see that the elements of $\mathsf{\mathcal{I}}$ can
be interpreted as the nodes of an infinite binary tree: $r=(1)$ is
the root, and if $r\in\mathcal{\mathcal{I}}$ with $\gamma>1$, then
$\hat{r}$ is its parent. Moreover, $\mathcal{I}^{(k)}$ presents
the set of nodes having depth $k-1$. If $r\in\mathcal{\mathcal{I}}$
then it easily follows by induction that $r\in\mathcal{\mathcal{I}}^{(\gamma)}$.
Furthermore, we have for $\gamma>1$ that $r\in\mathcal{\mathcal{I}}$
if and only if $\hat{r}\in\mathcal{\mathcal{I}}$. 

The next proposition shows that the skew-symmetric sequences are exactly
the binary sequences which have a balanced run length encoding.
\begin{prop}
\label{prop:skew_balanced}$\mathsf{\mathcal{L}}_{s}=\mathsf{\mathcal{L}}_{b}.$ \end{prop}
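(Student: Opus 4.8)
The plan is to prove $\mathcal{L}_s = \mathcal{L}_b$ by showing that both sets coincide with the inductively constructed set $\mathcal{I}$. This reduces the proposition to the two equalities $\mathcal{L}_s = \mathcal{I}$ and $\mathcal{L}_b = \mathcal{I}$, each of which I would establish by induction on $\gamma$. The key structural fact already in hand is the pair of equivalences (\ref{eq:reduce_skew}) and (\ref{eq:reduce_balance}), together with the observation that $r \in \mathcal{I}$ iff ($\gamma = 1$ and $r = (1)$) or ($\gamma > 1$, $r$ is reducible, and $\hat{r} \in \mathcal{I}$). Since all three descriptions share the \emph{same} recursive shape---reducibility plus membership of the reduced encoding---the whole argument is really a single induction run in parallel for $\mathcal{L}_s$ and $\mathcal{L}_b$.

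First I would handle the base case $\gamma = 1$. A length-$n$ binary sequence with a single run has $S = T = \varnothing$ and $n = 1$, so $r = (1)$ lies in $\mathcal{I}^{(1)}$; I must check that $(1)$ is both skew-symmetric (trivially, since $m = 1$ leaves no constraints in (\ref{eq:skew})) and balanced (the conditions $S \cup T = \{1,\dots,n-1\} = \varnothing$ and $S \cap T = \varnothing$ hold vacuously when $n = 1$). Thus $\mathcal{L}_s \cap \{\gamma = 1\} = \mathcal{L}_b \cap \{\gamma = 1\} = \mathcal{I}^{(1)} = \{(1)\}$.

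For the inductive step, suppose the claim holds for all run length encodings with at most $k$ runs, and let $r$ have $\gamma = k+1 > 1$. By (\ref{eq:reduce_skew}), $r \in \mathcal{L}_s$ iff $r$ is reducible and $\hat{r} \in \mathcal{L}_s$; since $\hat{r}$ has $k$ runs, the induction hypothesis gives $\hat{r} \in \mathcal{L}_s \Leftrightarrow \hat{r} \in \mathcal{I}$, and by the recursive description of $\mathcal{I}$ the combination ``$r$ reducible and $\hat{r} \in \mathcal{I}$'' is exactly the statement $r \in \mathcal{I}^{(k+1)} \subseteq \mathcal{I}$. Hence $r \in \mathcal{L}_s \Leftrightarrow r \in \mathcal{I}$. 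The identical chain of equivalences, now invoking (\ref{eq:reduce_balance}) instead, yields $r \in \mathcal{L}_b \Leftrightarrow r \in \mathcal{I}$. Combining the two gives $\mathcal{L}_s = \mathcal{I} = \mathcal{L}_b$, which is the proposition.

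The main obstacle is not the induction bookkeeping, which is routine once the three recursive characterizations are lined up, but rather the verification of the two equivalences (\ref{eq:reduce_skew}) and (\ref{eq:reduce_balance}) that the excerpt asserts ``quite easily follows''; if these are taken as given (as the statement permits), the proof is essentially immediate. I would therefore be careful to confirm that the reduction operation $r \mapsto \hat{r}$ genuinely decreases the sequence length by exactly one while preserving the parity/placement structure of $S$ and $T$---in particular that deleting a boundary run of length $1$ and trimming the opposite outer run by one unit sends the balance condition on $\{1,\dots,n-1\}$ to the balance condition on $\{1,\dots,n-3\}$ for the shortened sequence, and likewise preserves the skew-symmetry constraint (\ref{eq:skew}) after reindexing. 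These are precisely the facts encoded in (\ref{eq:reduce_skew}) and (\ref{eq:reduce_balance}), so the risk lies entirely in whether those two equivalences are correctly stated for \emph{both} reducibility cases $r_1 = 1$ and $r_\gamma = 1$.
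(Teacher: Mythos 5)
Your proposal is correct and follows essentially the same route as the paper: both arguments reduce the proposition to $\mathcal{L}_{s}=\mathcal{I}$ and $\mathcal{L}_{b}=\mathcal{I}$, driven by the recursions (\ref{eq:reduce_skew}), (\ref{eq:reduce_balance}) and the recursive description of $\mathcal{I}$; the paper merely phrases the induction as a minimal-counterexample argument. The only point to tighten is your base case, where ``a single run \ldots{} and $n=1$'' should be stated as a conclusion (a single-run encoding $(n)$ with $n>1$ is neither balanced, since $S\cup T=\varnothing$, nor skew-symmetric, since (\ref{eq:skew}) with $i=1$ fails for a constant sequence), not as an automatic property.
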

\begin{proof}
First we will show that $\mathsf{\mathcal{L}}_{s}\subseteq\mathcal{\mathcal{I}}$.
Assume that this is not the case and choose $r\in\mathcal{L}_{s}$
with minimal $\gamma$ such that $r\notin\mathcal{\mathcal{I}}$.
Then $\gamma>1$ and by (\ref{eq:reduce_skew}) $r$ is reducible
and $\hat{r}\in\mathcal{L}_{s}$. Since $r\notin\mathcal{\mathcal{I}}$
we also have $\hat{r}\notin\mathcal{\mathcal{I}}$ which contradicts
the minimality of $\gamma$. Next we want to show that $\mathcal{\mathcal{I\subseteq}}\mathsf{\mathcal{L}}_{s}$.
Similar as before, assume that this is not the case and choose $r\in\mathcal{\mathcal{I}}$
with minimal $\gamma$ such that $r\notin\mathsf{\mathcal{L}}_{s}$.
Then $\gamma>1$, $r$ is reducible and $\hat{r}\in\mathcal{\mathcal{I}}$.
By (\ref{eq:reduce_skew}) $\hat{r}$ is not skew-symmetric which
contradicts the minimality of $\gamma$. Therefore, we have $\mathsf{\mathcal{L}}_{s}=\mathcal{I}$.
Using (\ref{eq:reduce_balance}) similar arguments show that also
$\mathsf{\mathcal{L}}_{s}=\mathcal{I}$ which completes the proof.
\end{proof}

\section{The Periodic Case}

The situation in the periodic case is quite similar to the aperiodic
one. In order to formulate and prove the relationship between the
periodic autocorrelations of binary sequences and their run structure
we have to adapt some of the previous definitions to the periodic
case. In the following we will always assume that $a$ is a binary
sequence of length $n$ which is not constant, i.e.\ $a_{i}\neq a_{j}$
for some $\ensuremath{1\leq i<j\leq n}$. This is no loss of generality
as we will see at the end of this section. We could prove the following
results in a very similar way as we have done in the aperiodic case.
However, by using (\ref{eq:Ckperiodic}) we can directly transfer
the aperiodic results of the previous section to the periodic case. 

In the following we will in addition always assume that $a_{1}\neq a_{n}$.
Again this is no loss of generality since the periodic autocorrelations
are shift-invariant, i.e.\ for $b:=(a_{n},a_{1},a_{2},\cdots,a_{n-1})$
we have $\tilde{C}_{k}(b)=\tilde{C}_{k}(a)$ for all $k=0,1,\cdots,n$.
The main reason for the additional assumption $a_{1}\neq a_{n}$ is
that we can in this case adopt the definition of the the run length
encoding $r=(r_{1},r_{2},\cdots,r_{\gamma})$ of $a$ without any
modification. Note that $\gamma$ is always even if $a_{1}\neq a_{n}$.

\subsection{Preliminaries for the Periodic Case }

As in the periodic case substrings of the run length encoding $r=(r_{1},r_{2},\cdots,r_{\gamma})$
of $a$ play an important role. For $1\leq i\leq\gamma$ we will in
the following always put $r_{\gamma+i}:=r_{i}$. Similar to the concept
of half-open interval on a circle we have to adapt the definition
of a substring to the periodic case. In order to distinguish this
new definition from the previous one we will refer to \emph{p-substrings}.
As we will see, p-substrings $r(i,j)$ of $r$ will be only defined
if and only if $1\leq i,j\leq\gamma$ and $i\neq j$.

For $1\leq i<j\leq\gamma$ the p-substring $r(i,j)$ is the same as
in the periodic case; thus representing $(r_{i},r_{i+1},\cdots,r_{j-1})$.
For $1\leq j<i\leq\gamma$ the p-substring $r(i,j)$ will represent
the non-empty contiguous part $(r_{i},r_{i+1},\cdots,r_{\gamma+j-1})$
of the periodic extension of $r$. Unless otherwise stated we will
in the following always assume that $1\leq i,\; j\leq\gamma$ and
$i\neq j$. As before $|r(i,j)|$ denotes the length of the p-substring,
hence 
\[
|r(i,j)|=\begin{cases}
j-i\; & \mbox{ if }1\leq i<j\leq\gamma\\
\gamma+j-i\; & \mbox{ if }1\leq j<i\leq\gamma.
\end{cases}
\]

In particular we have $1\leq|r(i,j)|<\gamma$. Note that the p-substrings
$r(i,j)$ and $r(j,i)$ are complementary: they have no element in
common and their concatenation represents all elements of $r$; in
particular we have
\begin{equation}
|r(i,j)|+|r(j,i)|=\gamma.\label{eq:r_ij}
\end{equation}

\subsection{The Main Result for the Periodic Case }

The \emph{periodic} \emph{run structure} $\mathsf{\mathcal{\tilde{R}}}$
of $a$ will be defined as the set of all p-substrings of $r$. Similar
to the periodic case we will analyze the connection between the periodic
autocorrelations $\tilde{C}_{k}(a)$ of a binary sequence $a$ of
length $n$ and their periodic run structure given by the run length
encoding $r=(r_{1},r_{2},\cdots,r_{\gamma})$ of $a$. The case $k=1$
is easy; similar to the aperiodic case we have
\begin{equation}
\tilde{C}_{1}(a)=n-2\gamma\label{eq:C1_per}
\end{equation}
since for each $j=1,2,\cdots,\gamma$ the $j$th run of $a$ contributes
$r_{j}-2$ to the sum $\tilde{C}_{1}(a)$. Thus $\tilde{C}_{1}(a)=\sum_{j=1}^{\gamma}(r_{j}-2)=n-2\gamma$. 

The \emph{sum} $S(r(i,j))$ for a p-substring $r(i,j)$ will be defined
as
\[
S(r(i,j))=\begin{cases}
\sum_{m=i}^{j-1}r_{m}\; & \mbox{ if }1\leq i<j\leq\gamma\\
\sum_{m=i}^{\gamma+j-1}r_{m}\; & \mbox{ if }1\leq j<i\leq\gamma.
\end{cases}
\]

Let $1\leq k<n$; we will denote by $\mathsf{\mathcal{\tilde{R}}}_{k}$
the set of all p-substrings $r(i,j)$ of $r$ with $S(r(i,j))=k$
(where as always $1\leq i,j\leq\gamma$ and $i\neq j$) and put
\begin{equation}
\tilde{R}_{k}:=\sum_{u\in\mathcal{\mathit{\mathcal{\tilde{R}}}}_{k}}(-1)^{|u|}.\label{eq:Rk_per}
\end{equation}
 Similar as in the aperiodic case a p-substring of $r$ corresponds
in a one-to-one way to consecutive runs of $a$. Note however, that
\emph{consecutive} has in the periodic case a slightly different meaning
due to the periodic extension of a. Similar to the aperiodic case
$\tilde{R}_{k}$ can be interpreted as the sum of weights of those
consecutive runs of $a$ which (total) length equals $k$; the weight
is in the periodic case either 1 or -1 depending on whether the number
of runs is even or odd. In particular, $-\tilde{R}_{1}$ equals the
number of runs of length 1.

Since $S(r(i,j))+S(r(j,i))=n$, we have $r(i,j)\in\tilde{R}_{k}$
if and only if $r(j,i)\in\tilde{R}_{n-k}$. In particular, this shows
that $|\mathsf{\mathcal{\tilde{R}}}_{k}|=|\mathsf{\mathcal{\tilde{R}}}_{n-k}$|.
Moreover, since $\gamma$ is even, it follows from (\ref{eq:r_ij})
that 
\begin{equation}
(-1)^{|r(i,j)|}=(-1)^{|r(j,i)|}\label{eq:r(j,i)}
\end{equation}
and hence $\tilde{R}_{k}=\tilde{R}_{n-k}$. 

Similar to the periodic case we put $\tilde{R}(a):=(\tilde{R}_{1},\tilde{R}_{2},\cdots,\tilde{R}_{n-1})$.
Let us compute $\tilde{R}(a)$ for the following two sequences of
the previous examples. For $a=(++++++-------)$ we have $n=13$, $\gamma=2$
and $r=(6,7)$. Furthermore, we have $\mathsf{\mathcal{\tilde{R}}}_{k}=\textrm{Ø}$
for $1\leq k\leq12$ unless $k=6$ or $k=7$; in these cases we have
$\mathsf{\mathcal{\tilde{R}}}_{6}=\left\{ r(1,2)\right\} $ and $\mathsf{\mathcal{\tilde{R}}}_{7}=\left\{ r(2,1)\right\} $.
Hence $\tilde{R}(a)=(0,0,0,0,0,-1,-1,0,0,0,0,0)$ by (\ref{eq:Rk_per}).
For $a=(+++------+++---)$ we have $n=15$, $\gamma=4$ and $r=(3,6,3,3)$.
In order to calculate for instance $\tilde{R}_{6}$ we have by (\ref{eq:Rk_per})
to consider all consecutive runs of $a$ which have a total length
of 6. In this case these consecutive runs correspond to the p-substrings
$r(2,3)$, $r(3,1)$ and $r(4,2)$; the p-substrings $r(2,3)$ consists
of a single run, whereas the p-substrings $r(3,1)$ and $r(4,2)$
each consists of two consecutive runs. This shows that $\mathsf{\mathcal{\tilde{R}}}_{6}=\left\{ r(2,3),r(3,1),r(4,2)\right\} $
and $\tilde{R}_{6}=1$. Similarly we have $\mathsf{\mathcal{\tilde{R}}}_{3}=\left\{ r(1,2),r(3,4),r(4,1)\right\} $,
$\mathsf{\mathcal{\tilde{R}}}_{9}=\left\{ r(1,3),r(2,4),r(3,2)\right\} $,
$\mathsf{\mathcal{\tilde{R}}}_{12}=\left\{ r(1,4),r(2,1),r(4,3)\right\} $
and $\mathsf{\mathcal{\tilde{R}}}_{k}=\textrm{Ø}$ if $k$ is not
a multiple of 3; hence $\tilde{R}(a)=(0,0,-3,0,0,1,0,0,1,0,0,-3,0,0)$.
\begin{lem}
\label{lem:Rk}$2\tilde{R}_{k}=R_{k}+R_{n-k}\;\;\mbox{for }k=1,2,\cdots,n-1$.\end{lem}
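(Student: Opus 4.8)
The plan is to prove this purely combinatorially, comparing the run-structure descriptions of $\tilde{R}_{k}$ and of $R_{k}+R_{n-k}$; the periodic analogue of Theorem \ref{thm:main_aperiodic} is not yet available, so I would not route through the autocorrelations. The governing observation is that every p-substring of $r$ is of exactly one kind: either \emph{non-wrapping}, $r(i,j)$ with $1\le i<j\le\gamma$, which is literally an ordinary substring of $r$ not ending at position $\gamma+1$; or \emph{wrapping}, $r(i,j)$ with $1\le j<i\le\gamma$, whose complementary p-substring $r(j,i)$ is non-wrapping. Throughout, fix $k$ with $1\le k<n$ and write $\sigma(X):=\sum_{u\in X}(-1)^{|u|}$ for a set $X$ of substrings.

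First I would express $\tilde{R}_{k}$ through aperiodic substrings. Let $\mathcal{A}_{k}$ be the set of ordinary substrings of $r$ of sum $k$ that do not end at position $\gamma+1$; these are exactly the non-wrapping members of $\tilde{\mathcal{R}}_{k}$. A wrapping member $r(i,j)\in\tilde{\mathcal{R}}_{k}$ has, since the two complementary sums add to $n$, a complement $r(j,i)$ which is non-wrapping of sum $n-k$, hence lies in $\mathcal{A}_{n-k}$; the swap $r(i,j)\mapsto r(j,i)$ is a bijection, and because $\gamma$ is even (\ref{eq:r(j,i)}) gives $(-1)^{|r(i,j)|}=(-1)^{|r(j,i)|}$, so this bijection preserves signs. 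Summing (\ref{eq:Rk_per}) over the two kinds therefore yields $\tilde{R}_{k}=\sigma(\mathcal{A}_{k})+\sigma(\mathcal{A}_{n-k})$, so $2\tilde{R}_{k}=2\sigma(\mathcal{A}_{k})+2\sigma(\mathcal{A}_{n-k})$.

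Next I would expand $R_{k}+R_{n-k}$ from (\ref{eq:RkDef}) and reconcile the two. The substrings of $r$ of sum $k$ split into $\mathcal{A}_{k}$ and the suffixes $\mathcal{E}_{k}$ of the form $r(p,\gamma+1)$; inside $\mathcal{A}_{k}$ the inner substrings (those with $p\ge2$) carry weight $\alpha=2$ while the at most one prefix $r(1,q)$ carries $\alpha=1$, and every suffix in $\mathcal{E}_{k}$ is outer with $\alpha=1$. Writing $\mathcal{A}_{k}^{\mathrm{out}}$ for the prefix part, a short rearrangement gives $R_{k}=2\sigma(\mathcal{A}_{k})-\sigma(\mathcal{A}_{k}^{\mathrm{out}})+\sigma(\mathcal{E}_{k})$, and likewise with $n-k$. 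Comparing with $2\tilde{R}_{k}$, the whole identity collapses to the single cancellation $\sigma(\mathcal{A}_{k}^{\mathrm{out}})+\sigma(\mathcal{A}_{n-k}^{\mathrm{out}})=\sigma(\mathcal{E}_{k})+\sigma(\mathcal{E}_{n-k})$.

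Finally I would settle this cancellation by reading every term off $f_{S}$. Because the prefix sums are strictly increasing, at most one prefix $r(1,q)$ has sum $k$, and when $k=s_{j}$ its length is $j$, so $\sigma(\mathcal{A}_{k}^{\mathrm{out}})=f_{S}(k)$. A suffix $r(p,\gamma+1)$ of sum $k$ satisfies $s_{p-1}=n-k$ and has length $\gamma-(p-1)$, a sign which equals $(-1)^{p-1}$ since $\gamma$ is even; hence $\sigma(\mathcal{E}_{k})=f_{S}(n-k)$. Thus $\sigma(\mathcal{A}_{k}^{\mathrm{out}})+\sigma(\mathcal{A}_{n-k}^{\mathrm{out}})=f_{S}(k)+f_{S}(n-k)=\sigma(\mathcal{E}_{n-k})+\sigma(\mathcal{E}_{k})$, as required; alternatively one may phrase the last equality through $f_{T}$ via (\ref{eq:f=00003Dg}). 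I expect the main obstacle to be not any individual computation but the bookkeeping in the third step—keeping the inner/outer weights ($\alpha=2$ versus $\alpha=1$) and the prefix-versus-suffix contributions straight, and invoking the hypothesis that $\gamma$ is even in precisely the two places it is genuinely needed: the sign-preserving complementation and the prefix/suffix identification.
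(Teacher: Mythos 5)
Your proof is correct, and it follows the same basic strategy as the paper's own argument --- a direct combinatorial comparison of the periodic run structure with $\mathcal{R}_k$ and $\mathcal{R}_{n-k}$ via the complementation $r(i,j)\leftrightarrow r(j,i)$, using that $\gamma$ is even to preserve signs --- but the bookkeeping is organized differently. The paper partitions $\tilde{\mathcal{R}}_k$ into \emph{three} classes, $\tilde{\mathcal{R}}_k^{(in)}$ (with $1<i<j\le\gamma$), $\tilde{\mathcal{R}}_k^{(amid)}$ (with $1<j<i\le\gamma$) and $\tilde{\mathcal{R}}_k^{(out)}$ (with $i=1$ or $j=1$), chosen so that each class maps onto the inner or outer part of $\mathcal{R}_k$ or $\mathcal{R}_{n-k}$ with the weights $\alpha=2$ versus $\alpha=1$ coming out exactly right; nothing is left over to cancel. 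Your two-way wrap/non-wrap split yields the cleaner intermediate identity $\tilde{R}_k=\sigma(\mathcal{A}_k)+\sigma(\mathcal{A}_{n-k})$, at the cost of the residual boundary discrepancy $\sigma(\mathcal{A}_k^{\mathrm{out}})+\sigma(\mathcal{A}_{n-k}^{\mathrm{out}})=\sigma(\mathcal{E}_k)+\sigma(\mathcal{E}_{n-k})$, which you settle by evaluating all four terms through $f_S$. That last step is correct but uses more machinery than needed: the complement of the prefix $r(1,q)$ of sum $k$ is the suffix $r(q,\gamma+1)$ of sum $n-k$, and since $\gamma$ is even this is a sign-preserving bijection from $\mathcal{A}_k^{\mathrm{out}}$ onto $\mathcal{E}_{n-k}$, which gives the cancellation directly --- this is essentially how the paper handles its outer class. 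Both arguments rest on the same two facts (complementary sums add to $n$, complementary lengths add to the even number $\gamma$); the paper's choice of partition simply makes the weight accounting disappear into the definition of the classes, while yours isolates it into one extra, correctly handled, identity.
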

\begin{proof}
Let $1\leq k<n$. We can write $\mathsf{\mathcal{R}}_{k}$ as the
(disjoint) union $\mathsf{\mathcal{R}}_{k}=\mathsf{\mathcal{R}}_{k}^{(in)}\cup\mathsf{\mathcal{R}}_{k}^{(out)}$
where $\mathsf{\mathcal{R}}_{k}^{(in)}$denotes the set of all inner
substrings in $\mathsf{\mathcal{R}}_{k}$ and $\mathsf{\mathcal{R}}_{k}^{(out)}$denotes
the set of all outer substrings in $\mathsf{\mathcal{R}}_{k}$. Similarly
we can write $\mathsf{\mathcal{\tilde{R}}}_{k}$ as the (disjoint)
union
\[
\mathsf{\mathcal{\tilde{R}}}_{k}=\mathsf{\mathcal{\tilde{R}}}_{k}^{(in)}\cup\mathsf{\mathcal{\tilde{R}}}_{k}^{(out)}\cup\mathsf{\mathcal{\tilde{R}}}_{k}^{(amid)}
\]
 where $\mathsf{\mathcal{\tilde{R}}}_{k}^{(in)}:=\{r(i,j)\in\mathsf{\mathcal{\tilde{R}}}_{k}:1<i<j\leq\gamma\}$,
$\mathsf{\mathcal{\tilde{R}}}_{k}^{(out)}:=\{r(i,j)\in\mathsf{\mathcal{\tilde{R}}}_{k}:i=1\mbox{ or }j=1\}$
and $\mathsf{\mathcal{\tilde{R}}}_{k}^{(amid)}:=\{r(i,j)\in\mathsf{\mathcal{\tilde{R}}}_{k}:1<j<i\leq\gamma\}$.

If $u\in\mathsf{\mathcal{\tilde{R}}}_{k}^{(in)}$ then $u$ is an
inner substring; moreover, we have $u\in\mathsf{\mathcal{\tilde{R}}}_{k}^{(in)}$
if and only if $u\in\mathsf{\mathcal{R}}_{k}^{(in)}$ and thus
\[
2\sum_{u\in\mathcal{\mathit{\mathcal{\tilde{R}}}}_{k}^{(in)}}(-1)^{|u|}=\sum_{u\in\mathcal{\mathit{\mathcal{R}}}_{k}^{(in)}}\alpha(u)\cdot(-1)^{|u|}.
\]

If $r(i,j)\in\mathsf{\mathcal{\tilde{R}}}_{k}^{(amid)}$ then $r(j,i)$
is an inner substring; furthermore we have $r(i,j)\in\mathsf{\mathcal{\tilde{R}}}_{k}^{(amid)}$
if and only if $r(j,i)\in\mathsf{\mathcal{R}}_{n-k}^{(in)}$ and thus
by (\ref{eq:r(j,i)}) 
\[
2\sum_{u\in\mathcal{\mathit{\mathcal{\tilde{R}}}}_{k}^{(amid)}}(-1)^{|u|}=\sum_{u\in\mathcal{\mathit{\mathcal{R}}}_{n-k}^{(in)}}\alpha(u)\cdot(-1)^{|u|}.
\]

For $r(i,j)\in\mathsf{\mathcal{\tilde{R}}}_{k}^{(out)}$we consider
the two possible cases $i=1$ and $j=1$ separately:
\[
r(1,j)\in\mathsf{\mathcal{\tilde{R}}}_{k}^{(out)}\Leftrightarrow r(1,j)\in\mathsf{\mathcal{R}}_{k}^{(out)}\Leftrightarrow r(j,\gamma+1)\in\mathsf{\mathcal{R}}_{n-k}^{(out)}
\]

\[
r(i,1)\in\mathsf{\mathcal{\tilde{R}}}_{k}^{(out)}\Leftrightarrow r(i,\gamma+1)\in\mathsf{\mathcal{R}}_{k}^{(out)}\Leftrightarrow r(1,i)\in\mathsf{\mathcal{R}}_{n-k}^{(out)}.
\]

Hence by (\ref{eq:r(j,i)})
\[
\sum_{u\in\mathcal{\mathit{\mathcal{\tilde{R}}}}_{k}^{(out)}}(-1)^{|u|}=\sum_{u\in\mathcal{\mathit{\mathcal{R}}}_{k}^{(out)}}\alpha(u)\cdot(-1)^{|u|}=\sum_{u\in\mathcal{\mathit{\mathcal{R}}}_{n-k}^{(out)}}\alpha(u)\cdot(-1)^{|u|}.
\]

Summing up we obtain
\[
2\sum_{u\in\mathcal{\mathit{\mathcal{\tilde{R}}}}_{k}}(-1)^{|u|}=\sum_{u\in\mathcal{\mathit{\mathcal{R}}}_{k}}\alpha(u)\cdot(-1)^{|u|}+\sum_{u\in\mathcal{\mathit{\mathcal{R}}}_{n-k}}\alpha(u)\cdot(-1)^{|u|}.
\]

\end{proof}
Since $\gamma$ is even it easily follows from (\ref{eq:sum_rk})
and Lemma \ref{lem:Rk} that $\sum_{k=1}^{n-1}\tilde{R}_{k}=-\gamma$.
The next theorem (cf.\ \cite{cai2009autocorrelation}) which is the
main result of this section shows that for a binary sequence $a$
the periodic autocorrelations and $\tilde{R}(a)$ are closely related.
\begin{thm}
\label{thm:mainPeriodic}Let $k=1,2,\cdots,n-1$; then
\[
\tilde{C}_{k+1}(a)-2\tilde{C}_{k}(a)+\tilde{C}_{k-1}(a)=-4\tilde{R}_{k}.
\]
\end{thm}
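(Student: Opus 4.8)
The plan is to derive the statement directly from the aperiodic Theorem \ref{thm:main_aperiodic} together with Lemma \ref{lem:Rk}, using the linear relation (\ref{eq:Ckperiodic}) between the periodic and aperiodic autocorrelations, rather than repeating the run-block argument in the periodic setting. First I would record that with the conventions $C_0(a)=n$, $C_n(a)=0$ and $\tilde{C}_0(a)=\tilde{C}_n(a)=n$ the identity $\tilde{C}_k(a)=C_k(a)+C_{n-k}(a)$ in fact holds for every $k=0,1,\dots,n$ (the endpoint cases $k=0$ and $k=n$ both read $n=n+0$), so that it may be substituted into each of the three terms $\tilde{C}_{k-1}$, $\tilde{C}_k$, $\tilde{C}_{k+1}$ that appear on the left-hand side.

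Next I would expand and regroup. Substituting (\ref{eq:Ckperiodic}) into the periodic second difference and collecting the aperiodic terms by their central index gives
\begin{align*}
\tilde{C}_{k+1}(a)-2\tilde{C}_k(a)+\tilde{C}_{k-1}(a)
&= \bigl(C_{k+1}(a)-2C_k(a)+C_{k-1}(a)\bigr)\\
&\quad + \bigl(C_{n-k-1}(a)-2C_{n-k}(a)+C_{n-k+1}(a)\bigr).
\end{align*}
The first bracket is exactly the second difference of $C(a)$ centred at $k$, and the second bracket is the second difference of $C(a)$ centred at $n-k$ (reading its three indices as $(n-k)-1$, $(n-k)$, $(n-k)+1$). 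Because $1\le k\le n-1$ forces $1\le n-k\le n-1$ as well, Theorem \ref{thm:main_aperiodic} applies to both brackets, yielding $-2R_k$ and $-2R_{n-k}$ respectively.

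Finally I would combine these with Lemma \ref{lem:Rk}. The two brackets sum to $-2\bigl(R_k+R_{n-k}\bigr)$, and since $2\tilde{R}_k=R_k+R_{n-k}$ this equals $-4\tilde{R}_k$, which is the claim. The only point requiring care — and the closest thing to an obstacle — is the bookkeeping at the index boundaries: one must confirm that all six aperiodic values invoked lie in the defined range $C_0(a),\dots,C_n(a)$ and that the extended form of (\ref{eq:Ckperiodic}) is valid at $k=0$ and $k=n$. Once this is verified the computation is purely mechanical, and no periodic analogue of the weight/run-block machinery is needed.
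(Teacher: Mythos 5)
Your proposal is correct and follows essentially the same route as the paper: the paper's own proof likewise applies Theorem \ref{thm:main_aperiodic} at the indices $k$ and $n-k$ and then invokes (\ref{eq:Ckperiodic}) together with Lemma \ref{lem:Rk}. Your version merely spells out the regrouping of the six aperiodic terms and the boundary check that (\ref{eq:Ckperiodic}) extends to $k=0$ and $k=n$, details the paper leaves implicit.
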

\begin{proof}
Let $1\leq k<n$. By Theorem \ref{thm:main_aperiodic} we have $C_{k+1}(a)-2C_{k}(a)+C_{k-1}(a)=-2R_{k}$
and $C_{(n-k)+1}(a)-2C_{n-k}(a)+C_{(n-k)-1}(a)=-2R_{n-k}$. Hence
the theorem follows by (\ref{eq:Ckperiodic}) and Lemma \ref{lem:Rk}.
\end{proof}
By (\ref{eq:C1_per}) we have $\tilde{C}_{1}(a)=n-2\gamma$ and Theorem
\ref{thm:mainPeriodic} gives us that $\tilde{C}_{2}(a)=2\tilde{C}_{1}(a)-n-4\tilde{R}_{1}$
and therefore $\tilde{C}_{2}(a)=n-4\gamma-4\tilde{R}_{1}$. 

Let us denote by $\tilde{C}(a):=(\tilde{C}_{0}(a),\tilde{C}_{1}(a),\cdots,\tilde{C}_{n}(a))$
the \emph{periodic autocorrelation vector} of $a$.
\begin{cor}
\label{cor:delta_per}$\triangle^{2}(\tilde{C}(a))=-4\tilde{R}(a)\;$
for $n\geq3$.\label{cor:-delta-1}\end{cor}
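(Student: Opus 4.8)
The plan is to recognize that this corollary is nothing more than a componentwise restatement of Theorem \ref{thm:mainPeriodic}, exactly as Corollary \ref{cor:-delta} reformulated Theorem \ref{thm:main_aperiodic} in the aperiodic case. The only actual work is to check that the two sides are indexed consistently; once that bookkeeping is settled, the identity is immediate.

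First I would record the lengths of the vectors involved. The periodic autocorrelation vector $\tilde{C}(a)=(\tilde{C}_{0}(a),\tilde{C}_{1}(a),\cdots,\tilde{C}_{n}(a))$ has $n+1$ components. For $n\geq3$ we have $n+1\geq3$, so $\triangle^{2}(\tilde{C}(a))$ is defined, and by the definition of the second difference operator its $k$th component, for $k=1,2,\cdots,n-1$, is
\[
\tilde{C}_{k+1}(a)-2\tilde{C}_{k}(a)+\tilde{C}_{k-1}(a).
\]
This yields exactly $n-1$ components, matching the length of $\tilde{R}(a)=(\tilde{R}_{1},\tilde{R}_{2},\cdots,\tilde{R}_{n-1})$. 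I would then invoke Theorem \ref{thm:mainPeriodic}, which asserts precisely that this $k$th component equals $-4\tilde{R}_{k}$ for every $k=1,2,\cdots,n-1$. Since the $k$th component of $-4\tilde{R}(a)$ is also $-4\tilde{R}_{k}$, the two vectors agree componentwise, giving $\triangle^{2}(\tilde{C}(a))=-4\tilde{R}(a)$.

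There is no genuine obstacle here: the entire content is carried by Theorem \ref{thm:mainPeriodic}, and the corollary merely repackages it in operator form. The single point deserving any care is the index alignment, namely confirming that $\triangle^{2}$ applied to the length-$(n+1)$ vector $\tilde{C}(a)$ produces a length-$(n-1)$ vector whose $k$th entry pairs with $\tilde{R}_{k}$, which the hypothesis $n\geq3$ guarantees is well-posed.
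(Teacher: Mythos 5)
Your proposal is correct and matches the paper exactly: the paper's own proof states that the corollary is just a reformulation of Theorem \ref{thm:mainPeriodic}, which is precisely your argument. The extra index-alignment check you perform is sound and merely makes explicit what the paper leaves implicit.
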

\begin{proof}
This is just a reformulation of Theorem \ref{thm:mainPeriodic}.\end{proof}
\begin{cor}
\label{cor:-Ck-p}Let $k=0,1,\cdots,n$; then
\[
\tilde{C}_{k}(a)=n-2\gamma k-4\sum_{j=1}^{k-1}(k-j)\tilde{R}_{j}.
\]
\end{cor}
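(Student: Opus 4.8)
The plan is to obtain this formula as a direct consequence of the second-difference relation $\triangle^{2}(\tilde{C}(a))=-4\tilde{R}(a)$ from Corollary~\ref{cor:delta_per}, in exactly the way Corollary~\ref{cor:-Ck} was derived from $\triangle^{2}(C(a))=-2R(a)$ in the aperiodic case. First I would dispose of the two base cases: for $k=0$ the asserted sum is empty and the right-hand side reduces to $n=\tilde{C}_{0}(a)$, while for $k=1$ the sum is again empty and the right-hand side equals $n-2\gamma$, which is $\tilde{C}_{1}(a)$ by~(\ref{eq:C1_per}). Thus the formula holds for $k=0,1$.

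For $k\geq2$ I would invoke Lemma~\ref{lem:delta} applied to the periodic autocorrelation vector $b:=\tilde{C}(a)=(\tilde{C}_{0}(a),\tilde{C}_{1}(a),\cdots,\tilde{C}_{n}(a))$, which has $m=n+1\geq3$ entries (the condition $k\geq2$ already forces $n\geq2$). Under the indexing of that lemma one reads $b_{k+1}=\tilde{C}_{k}(a)$, so that $b_{1}=\tilde{C}_{0}(a)=n$ and $b_{2}-b_{1}=\tilde{C}_{1}(a)-\tilde{C}_{0}(a)=-2\gamma$ by~(\ref{eq:C1_per}), while the second differences are $\epsilon_{j}=-4\tilde{R}_{j}$ by Corollary~\ref{cor:delta_per}. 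Substituting these three ingredients into the identity of Lemma~\ref{lem:delta} yields
\[
\tilde{C}_{k}(a)=n+(-2\gamma)k+\sum_{j=1}^{k-1}(k-j)(-4\tilde{R}_{j})=n-2\gamma k-4\sum_{j=1}^{k-1}(k-j)\tilde{R}_{j},
\]
which is precisely the claimed formula.

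Since every step is either a boundary check or a direct substitution into an already-established lemma, I do not expect a genuine obstacle. The only points that require a little care are the index shift (reading $\tilde{C}_{k}(a)$ as the $(k+1)$st component $b_{k+1}$ of a length-$(n+1)$ sequence) and confirming that the boundary data $b_{1}=n$ and $b_{2}-b_{1}=-2\gamma$, together with the factor $-4$, are the periodic-case analogues of the values $n$, $1-2\gamma$, and $-2$ appearing in the aperiodic Corollary~\ref{cor:-Ck}.
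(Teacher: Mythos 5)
Your proof is correct and follows exactly the paper's own route: verify the cases $k=0,1$ from $\tilde{C}_{0}(a)=n$ and (\ref{eq:C1_per}), then for $k\geq2$ substitute $b_{1}=n$, $b_{2}-b_{1}=-2\gamma$ and $\epsilon_{j}=-4\tilde{R}_{j}$ (from Corollary \ref{cor:delta_per}) into Lemma \ref{lem:delta}. You even use the correct coefficient $-4\tilde{R}(a)$ where the paper's proof text contains an evident typo ($2\tilde{R}(a)$).
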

\begin{proof}
Since $\tilde{C}_{0}(a)=n$ and by (\ref{eq:C1_per}) $\tilde{C}_{1}(a)=n-2\gamma$
the statement is true for $k=0,1$. For $k\geq2$ it follows directly
from $\triangle^{2}(\tilde{C}(a))=2\tilde{R}(a)$ and Lemma \ref{lem:delta}. 
\end{proof}
Finally, as in the aperiodic case a complementary formulation for
$\tilde{C}_{k}(a)$ in Corollary \ref{cor:-Ck-p} can be easily derived;
since $\tilde{C}_{n}(a)=n$ and $\tilde{C}_{n-1}(a)=\tilde{C}_{1}(a)=n-2\gamma$
equation (\ref{eq:bkplus1}) gives us that for $k=0,1,\cdots,n$ 
\[
\tilde{C}_{k}(a)=n-2\gamma(n-k)-4\sum_{j=k+1}^{n-1}(j-k)\tilde{R}_{j}.
\]

\begin{rem*}
In this section we have for technical reasons assumed that the binary
sequence $a$ is not constant. This however is no loss of generality
since Theorem \ref{thm:mainPeriodic} and the following results are
also true if $a$ is a constant binary sequence. This follows immediately
since for a constant binary sequence $a$ of length $n$ we have $\tilde{C}_{k}(a)=n$
and also $\mathsf{\mathcal{\tilde{R}}}_{k}=\textrm{Ø}$ and thus
$\tilde{R}_{k}=0$ for $k=1,2,\cdots,n-1$. 
\end{rem*}
\bibliographystyle{ieeetr}

\begin{thebibliography}{10}

\bibitem{borwein2008merit}
P.~Borwein, R.~Ferguson, and J.~Knauer, ``The merit factor problem,'' {\em
  London Mathematical Society Lecture Note Series}, vol.~352, p.~52, 2008.

\bibitem{jedwab2005surveyspringerlink}
J.~Jedwab, {\em A Survey of the Merit Factor Problem for Binary Sequences},
  vol.~3486 of {\em Lecture Notes in Comput. Sci.}, ch.~2, pp.~30--55.
\newblock Springer Berlin Heidelberg, 2005.

\bibitem{jungnickel1999perfect}
D.~Jungnickel and A.~Pott, ``Perfect and almost perfect sequences,'' {\em
  Discrete Applied Mathematics}, vol.~95, pp.~331--359, Jul 1999.

\bibitem{golomb1967shift}
S.~Golomb, L.~Welch, R.~Goldstein, and A.~Hales, {\em Shift register
  sequences}, vol.~51.
\newblock Holden-Day San Francisco, 1967.

\bibitem{golomb2005signal}
S.~Golomb and G.~Gong, {\em Signal Designs With Good Correlation: For Wireless
  Communications, Cryptography and Radar Applications}.
\newblock Cambridge University Press, 2005.

\bibitem{golomb1955sequences}
S.~Golomb, ``Sequences with randomness properties. glenn l. martin co. final
  report on contract no,'' tech. rep., W36-039SC-54-36611, Baltimore, Md, 1955.

\bibitem{cai2009autocorrelation}
K.~Cai, ``Autocorrelation-run formula for binary sequences,'' {\em Arxiv
  preprint arXiv:0909.4592}, 2009.

\bibitem{polge673421}
R.~Polge and H.~Stern, ``A new technique for the desgin of binary sequences
  with specified correlation,'' in {\em Southeastcon '81. Conf. Proc.},
  pp.~164--169, april 1981.

\bibitem{polge1986}
R.~Polge, ``{A general solution for the synthesis of binary sequences with
  desired correlation sequence},'' in {\em AGARD Conf. Proc. No. 381
  Multifunction Radar for Airborne Applications}, pp.~23--1 -- 23--9, 1986.

\bibitem{mertens1999exhaustive}
S.~Mertens, ``Exhaustive search for low-autocorrelation binary sequences,''
  {\em J. Phys. A, Math. Gen.}, vol.~29, no.~18, pp.~L473--L481, 1996.

\bibitem{Golay1055653}
M.~Golay, ``Sieves for low autocorrelation binary sequences,'' {\em IEEE Trans.
  Inf. Theory}, vol.~IT-23, pp.~43 -- 51, Jan. 1977.

\end{thebibliography}

\bigskip{}

\end{document}